\documentclass[11pt]{amsart}
\usepackage{graphicx,epsfig}
\usepackage{multirow}
\usepackage{algorithm}
\usepackage{algpseudocode}

\usepackage{amsmath,amssymb,latexsym, amsfonts, amscd, amsthm, xy}
\input{xy}
\xyoption{all}

\input{xy}
\xyoption{all}


\numberwithin{equation}{section}

\usepackage[usenames]{color}

\usepackage{parskip}

\pagestyle {plain}

\makeindex
\setcounter{tocdepth}{2}

\voffset = -15pt \hoffset = -40pt \textwidth = 430pt \textheight
=651pt \headheight = 12pt \headsep = 0pt

\theoremstyle{plain}
\newtheorem{theorem}{Theorem}[section]

\newtheorem{corollary}[theorem]{Corollary}

\newtheorem{lemma}{Lemma}[section]

\theoremstyle{definition}

\newtheorem{remark}{Remark}[section]

\DeclareMathOperator{\argmin}{argmin}

\begin{document}

\title{Bayesian Monotone Regression using Gaussian Process Projection}

\author{Lizhen Lin and David B. Dunson}
\email{lizhen@stat.duke.edu}
\email{dunson@stat.duke.edu}

\address{Department of Statistical Science, Duke University, Durham, NC 27708-0251, USA.  }

\maketitle

\begin{abstract}
Shape constrained regression analysis has applications in dose-response modeling, environmental risk assessment, disease screening and many other areas. Incorporating the shape constraints can improve estimation efficiency and avoid implausible results. We propose two novel methods focusing on Bayesian monotone curve and surface estimation using Gaussian process  projections. The first projects samples from an unconstrained prior, while the second projects samples from the Gaussian process posterior. Theory is developed on continuity of the projection, posterior consistency and rates of contraction. The second approach is shown to have an empirical Bayes justification and to lead to simple computation with good performance in finite samples. Our projection approach can be applied in other constrained function estimation problems including in multivariate settings.

\textbf{Keywords}:  Asymptotics; Bayesian nonparametrics; Isotonic regression; Projective Gaussian process; Shape constraint.
\end{abstract}


\section{Introduction}

In a rich variety of applications, prior knowledge is available on the shape of a surface, with examples including monotonicity, unimodality and convexity. Incorporating such shape constraints can often substantially improve estimation efficiency and stability, while producing results consistent with prior knowledge.  We propose two novel approaches based on Gaussian process projections.  Gaussian processes are routinely applied but have the disadvantage of not allowing constraints.  Although we focus on monotone curves and surfaces, the approach can be applied directly in much broader settings including additive models, multivariate regression with monotonicity constraints only in certain directions, and other types of shape constraints.

    There is a rich frequentist literature on monotone curve and isotonic regression estimation, with a common approach minimizing a least squares loss subject to a restriction (Barlow et al. 1972, Robertson et al., 1988). For more recent references,  refer to \cite{Bhattacharya1} and Bhattacharya $\&$ Lin (2010, 2011). Alternatively, restricted kernel (\cite{Muller2}, Dette et al. (2005) and Mammen (1991)) and spline (Ramsay (1988) and  \cite{kong2}) methods have been proposed.

    From a Bayesian perspective,  one specifies a prior on the regression function and inference is  based on the posterior distribution.  \cite{alan91} use an ordered Dirichlet prior on a strictly monotone dose-response function.  \cite{neelon} use  an additive model with a prior imposed on the slope of the piecewise linear functions.  Shively et al. (2009) and Shively et al. (2011) use restricted splines.  \cite{Bornkamp09}  adopt mixture modeling.

    Although there is a rich existing literature on Bayes monotone curve estimation, our work has two key motivations: (1) There is a lack of theory supporting these methods beyond consistency; (2) Current approaches involve basis expansions and challenges arise in multivariate cases. Gaussian processes have a rich theoretical foundation, can easily incorporate prior information, and can be implemented routinely in multivariate settings. We define a class of projective Gaussian processes which inherit these advantages.

\section{Gaussian process projections}

Let $w \sim  \mbox{\small{GP}}( \mu, R)$ denote the sample path of a `mother' Gaussian process indexed on $\mathcal{X} \subset \Re^p$, with $\mu: \mathcal{X} \to \Re$ the mean function and $R: \mathcal{X} \times \mathcal{X} \to \Re_+$ the covariance kernel. Let $\mathcal{M}$ be a subset of the space of continuous functions mapping from $\mathcal{X}$ to $\Re$ having some constraint.  We define the projective Gaussian process $P_w$ on the constrained space $\mathcal{M}$ as
\begin{equation}
\label{eq-mother}
 P_w =\argmin_{F \in \mathcal{M}} \int_{\mathcal{X}} \{ w(t) - F(t) \}^2 dt.
\end{equation}

Let $\mathcal{M}=\mathcal{M}[0,1]^p$ denote the space of monotone functions on $[0,1]^p$. Focusing initially on the $p=1$ case, \eqref{eq-mother} has the solution
\begin{equation}
\label{projection}
P_w(x)=\inf_{v\geq x} \sup_{u\leq x} \dfrac{1}{v-u}\int_u^vw(t)dt,\; \text{for}\; x\in[0,1].
\end{equation}
The existence and uniqueness of the projection follow from Theorem 1 in Rychlik (2001).

\begin{remark}
The projection in \eqref{projection} can be well approximated using the  pooled adjacent violators algorithm (Barlow et al. (1972)).
\end{remark}

Some  properties of the projection function include:
\begin{itemize}
\item[(1)] $P_{w}(x)=w(x)$ if $w$ is a monotone function. Therefore, $P_w$ is surjective.
\item[(2)] $P_{w}(x)=c$ if $w$ is a decreasing function where $c=\int_0^1w(s)ds$ which is the slope of the line joining $(0,0)$ and $(1,\int_0^1w(s)ds)$.
\item[(3)] $P_{w}(x)$ is a continuous function given $w$ is continuous (\cite{giso}).
\end{itemize}
Hence, in projecting the Gaussian process from $C[0,1]$ to  $\mathcal{M}[0,1]$ one induces a valid measure on the set of continuous monotone functions $\mathcal{M}[0,1]$.


The following lemma on the continuity of the projection as an operator is key to showing the projective Gaussian process inherits concentration and approximation properties of the mother Gaussian process.
\begin{lemma}
\label{lemma-projection}
Let $w_1, w_2$ be continuous functions on [0,1].  Then the following holds:
\begin{align}
\label{eq-projection}
\sup_{x \in [0,1]}|P_{w_1}(x)-P_{w_2}(x)|\leq \sup_{x \in [0,1]}|w_1(x)-w_2(x)|.
\end{align}
\end{lemma}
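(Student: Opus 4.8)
The plan is to prove the inequality pointwise in $x$ and then take the supremum. Fix $x \in [0,1]$ and write the projection from \eqref{projection} as the nested extremum
\[
P_{w}(x) = \inf_{v \geq x} \sup_{u \leq x} A_w(u,v), \qquad A_w(u,v) := \frac{1}{v-u}\int_u^v w(t)\,dt,
\]
where $A_w(u,v)$ denotes the running average of $w$ over the interval $[u,v]$. First I would observe that averaging is a linear, $1$-Lipschitz operation in the sup-norm: for any fixed $u \leq x \leq v$,
\[
|A_{w_1}(u,v) - A_{w_2}(u,v)| = \frac{1}{v-u}\left| \int_u^v \{w_1(t)-w_2(t)\}\,dt \right| \leq \sup_{t \in [0,1]}|w_1(t)-w_2(t)| =: \delta.
\]
Thus the two functions $A_{w_1}$ and $A_{w_2}$ on the index set $\{(u,v): u \leq x \leq v\}$ differ uniformly by at most $\delta$.

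The heart of the argument is then a general stability fact: if two real-valued functions $f,g$ on a common index set satisfy $\sup|f-g|\le\delta$, then their suprema differ by at most $\delta$, and likewise for infima. I would state and apply this lemma. Concretely, from $A_{w_1}(u,v) \leq A_{w_2}(u,v) + \delta$ for all admissible $u$, take $\sup_{u\le x}$ on both sides to get $\sup_{u\le x} A_{w_1}(u,v) \leq \sup_{u \le x} A_{w_2}(u,v) + \delta$ for each fixed $v\ge x$; then take $\inf_{v\ge x}$ to obtain $P_{w_1}(x) \leq P_{w_2}(x) + \delta$. By symmetry (swapping the roles of $w_1$ and $w_2$) the reverse inequality holds, so $|P_{w_1}(x) - P_{w_2}(x)| \leq \delta$. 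Since $\delta$ does not depend on $x$, taking $\sup_{x\in[0,1]}$ on the left yields \eqref{eq-projection}.

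The main obstacle is purely a matter of care with the $\sup$--$\inf$ manipulation rather than any deep estimate: one must justify that the $1$-Lipschitz bound passes correctly through the $\sup_{u\le x}$ (which it does, since the bound is uniform in $u$) and then through the outer $\inf_{v\ge x}$. The key is that the perturbation $\delta$ is a single constant independent of the index $(u,v)$ and of $x$, so it survives both extremizations unchanged; no interchange of $\sup$ and $\inf$ is required, and the nested order of the operators in \eqref{projection} is preserved throughout. Existence of the projection (hence that these extrema are attained or at least finite) is guaranteed by Theorem 1 of Rychlik (2001), already cited, so the inequalities are between genuine real numbers.
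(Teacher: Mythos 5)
Your proposal is correct and is essentially the paper's own argument: both rest on the observation that the interval averages of $w_1$ and $w_2$ differ uniformly by at most $\sup_x|w_1(x)-w_2(x)|$, and that this uniform perturbation passes through the nested $\sup_{u\le x}$ and $\inf_{v\ge x}$ unchanged. The only difference is presentational --- you invoke the $1$-Lipschitz stability of suprema and infima directly, whereas the paper unrolls that same fact by choosing $\epsilon$- and $\delta$-approximate extremizers $v_0$ and $u_0$ and letting $\epsilon,\delta\to 0$.
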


%
%
 Monotone curve estimation under the projective Gaussian process is easily extended to monotone surface estimation. As an illustration, suppose that $F \in \mathcal{M}[0,1]^2$ is a monotone continuous function on $[0,1]^2$ with respect to partial orderings, so that given $s_1 \le s_2$ and $t_1 \le t_2$, $F(s_1,t_1) \le F(s_2,t_2)$. Since $\mathcal{M}[0,1]^2$ is a closed convex cone, equation (1) can be solved to obtain $P_w$ from the sample path $w \sim  \mbox{\small{GP}}(\mu,R)$ of a two-dimensional Gaussian process. Apply Algorithm 1 to obtain the solution.\\
\begin{algorithm}[!h]
\caption{} \label{al1}
Given any fixed $t$, $w(s,t)$ is a function of $s$ and apply \eqref{projection} to obtain $\widehat{w}^{(1)}(s,t)$ by projecting $w$ along the $s$ direction.
Letting $S^{(1)} = \widehat{w}^{(1)} - w$, project $w + S^{(1)}$ onto $\mathcal{M}[0,1]$ with respect to the $t$ direction to obtain $\widetilde{w}^{(1)}(s,t)$.
 Let $T^{(1)} = \widetilde{w}^{(1)} - (w+S^{(1)})$.
 Letting $i=2,\ldots,k$, in the $i$th step  we obtain $\widehat{w}^{(i)}$ by projecting $w + T^{(i-1)}$ along the $s$ direction for any $t$ and $\widetilde{w}^{(i)}$ as the projection of $w+S^{(i)}$ along the $t$ direction for any $s$.
 The algorithm terminates when $\widehat{w}^{(i)}$ or $\widetilde{w}^{(i)}$ is monotone with respect to both $s$ and $t$ for some step $i$.
 \end{algorithm}

Theorem 1 characterizes the solution to Algorithm 1.
\begin{theorem}
\label{th-surfsol}
Let $P_{w}$ be the projection of $w$ solving \eqref{eq-mother}. Then one has
\begin{equation}
P_{w}(s,t)=\lim \widehat{w}^{(k)}(s,t)=\lim \widetilde{w}^{(k)}(s,t)\;\text{ as}\;k\rightarrow\infty,
\end{equation}
and
\begin{equation}
\sup_{s,t}|P_{w_1}(s,t)-P_{w_2}(s,t)|\leq \sup_{s,t}|w_1(s,t)-P_{w_2}(s,t)|.
\end{equation}
\end{theorem}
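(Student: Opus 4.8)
\section*{Proof proposal}

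The statement splits into two independent claims, the convergence of Algorithm~\ref{al1} to the $L^2$ projection and the sup-norm contraction, and I would treat them separately. For the first, the key observation is that $\cM[0,1]^2$ is the intersection of the two closed convex cones
\[
\cM_s=\{F\in L^2([0,1]^2): F(\cdot,t)\text{ is nondecreasing for a.e.\ }t\},\qquad \cM_t=\{F: F(s,\cdot)\text{ nondecreasing for a.e.\ }s\},
\]
since a function is monotone for the coordinatewise partial order precisely when it is monotone in each argument separately. Because $\int_{[0,1]^2}(w-F)^2$ decouples over the fibers, applying \eqref{projection} fiberwise realizes exactly the metric projection onto $\cM_s$, and likewise onto $\cM_t$. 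I would then check that Algorithm~\ref{al1} is Dykstra's cyclic projection algorithm, with $S^{(i)}$ and $T^{(i)}$ the Dykstra increment terms, and invoke the Boyle--Dykstra theorem: for closed convex cones in a Hilbert space the iterates converge in $L^2$ to the projection onto the intersection, so $\widehat{w}^{(k)}\in\cM_s$ and $\widetilde{w}^{(k)}\in\cM_t$ both converge to $\Pi_{\cM_s\cap\cM_t}(w)=P_w$, which solves \eqref{eq-mother}, and the increment $\widetilde{w}^{(k)}-\widehat{w}^{(k)}\to0$ forces the two limits to agree. One technical step remains: upgrading $L^2$ convergence to the pointwise convergence asserted in the display, which I would obtain from the monotonicity of the iterates together with the continuity of the limit $P_w$.

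For the contraction inequality (whose right-hand side I read as $\sup_{s,t}|w_1(s,t)-w_2(s,t)|$, matching Lemma~\ref{lemma-projection}, the $P_{w_2}$ there being a typographical slip), iterating Lemma~\ref{lemma-projection} along the algorithm does \emph{not} work directly: the Dykstra increments $T^{(i-1)}$ feed into each projection, and a triangle-inequality bound lets them accumulate rather than cancel. Instead I would bypass the algorithm and use the max--min representation of the isotonic projection,
\[
P_w(s,t)=\inf_{L\ni(s,t)}\ \sup_{U\ni(s,t)}\ \frac{1}{|L\cap U|}\int_{L\cap U}w,
\]
where $L$ and $U$ range over the lower and upper sets of the partial order that contain $(s,t)$; this is the two-dimensional analogue of \eqref{projection} and follows from classical order-restricted theory. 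Granting this formula, the inequality is immediate: each average obeys $\bigl|\,\overline{w_1}(A)-\overline{w_2}(A)\,\bigr|\le\sup_{s,t}|w_1-w_2|$ for every region $A=L\cap U$, and forming $\inf_L\sup_U$ of two families that are uniformly within $\sup_{s,t}|w_1-w_2|$ of each other preserves that bound.

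The main obstacle is justifying the continuous max--min representation rigorously; the Boyle--Dykstra step and the non-expansiveness of averages are then routine. The order-restricted max--min formula is classical for finitely many design points, but its continuous form over $[0,1]^2$ requires care in specifying the admissible families of lower and upper sets and in showing the resulting $\inf$--$\sup$ actually equals the $L^2$ projection $P_w$. I would establish it by discretizing $[0,1]^2$ on a grid, applying the finite-dimensional max--min formula to the restricted problem, and passing to the limit as the mesh shrinks, using uniform continuity of $w$ to control the averages over $L\cap U$ and to identify the limit with the continuous projection. A secondary point to verify is that the Boyle--Dykstra hypotheses hold in the infinite-dimensional space $L^2([0,1]^2)$ and that the fiberwise solution \eqref{projection} is genuinely the Hilbert-space metric projection onto each cone, both of which follow from closedness and convexity of $\cM_s$ and $\cM_t$.
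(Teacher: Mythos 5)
Your proposal is correct in substance and, for the convergence claim, lands on essentially the same mechanism as the paper, just packaged differently: the paper does not cite Boyle--Dykstra but re-derives that result in this special case. Its appendix lemma on dual cones is exactly the Moreau decomposition $P(w|C^*)=w-P(w|C)$ for $C_s^*$ and $C_t^*$, and the proof then establishes boundedness of the correction sequences $S^{(k)},T^{(k)}$, extracts convergent subsequences, and identifies the limit $w+S+T$ as the projection through the variational characterization $\langle w-w_L,w_L\rangle=0$ and $\langle w-w_L,h\rangle\le 0$ from Rychlik's Theorem 1 --- precisely the skeleton of the Boyle--Dykstra argument, with your identification of the algorithm as Dykstra's being exactly right (the paper's $-S^{(k)},-T^{(k)}$ are the Dykstra increments). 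Citing the theorem buys brevity; the paper's self-contained version adds only a boundedness argument tailored to the monotone cones. Your caveat about upgrading $L^2$ convergence to the pointwise convergence asserted in the display is well taken: the paper works throughout in the $\|\cdot\|_2$ norm and never addresses this step either, so it is a gap shared by both arguments rather than one you introduced.

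For the contraction inequality the two routes genuinely diverge. The paper disposes of it in one sentence (``combining Lemma~\ref{lemma-projection} and the properties of the projection''), which suggests iterating the one-dimensional sup-norm contraction along the algorithm; your observation that this does not work directly --- the corrections $S^{(i)},T^{(i)}$ enter each projection argument, and a triangle-inequality bound lets the discrepancy grow by a constant multiple of $\sup|w_1-w_2|$ per sweep rather than cancel --- is accurate and identifies a real weakness in the paper's justification. Your alternative via the max--min representation over lower and upper sets of the partial order is the standard correct route (it is the classical Robertson--Wright--Dykstra formula in the finite case): sup-norm non-expansiveness of the averages over $L\cap U$ passes through $\inf_L\sup_U$. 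The remaining work you flag --- establishing the continuous version of that formula on $[0,1]^2$ by discretization and a limiting argument --- is genuine but routine, and your reading of the right-hand side as $\sup_{s,t}|w_1(s,t)-w_2(s,t)|$, treating the $P_{w_2}$ there as a typographical slip, is surely the intended statement, since otherwise the inequality is not a contraction at all and does not parallel Lemma~\ref{lemma-projection}.
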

Theorem \ref{th-surfsol} implies that higher dimensional projections can be obtained by sequentially  projecting the adjusted $w$ along each of its directions. This approach can be trivially extended to $p>2$ dimensional problems with monotonicity constraints in one or more directions.


\section{ Bayesian inference under projective Gaussian process}
\subsection{Model and notation}

We carry out Bayesian inference under the projective Gaussian process, focusing on estimation of the $p$-dimensional monotone function $F(x)$ assuming the following model
\begin{equation}
\label{eq-model}
y_i=F(x_i)+\epsilon_i, \; 1\leq i\leq n\; \;(x_1\lesssim x_2\lesssim\ldots\lesssim x_n),
\end{equation}
where $\epsilon_i \sim N(0,\sigma^2)$ and $F(x)$ is monotone increasing under partial orderings in the sense that $F(x_1) \le F(x_2)$ whenever $x_1\lesssim x_2$.  Without loss of generality, assume  $x$ lies in the compact set $[0,1]^p$.  The design of $x_1,\ldots, x_n$ can be fixed or  random where  $x_i\sim G_0$ for some distribution $G_0$.  Although we focus on Gaussian residuals for simplicity, the methods can be automatically applied in general settings.

We first define some notions of neighborhoods.
Let $\eta = (F,\sigma)$, $\eta_0 = (F_0,\sigma_0)$ denote the true value, and $\Pi$
denote the prior on $\eta$, which is expressed as $\Pi_F\Pi_{\sigma}$, where $\Pi_F$
and $\Pi_{\sigma}$ are independent priors on $F$ and $\sigma$ respectively.  As shorthand, let
$f_{xF}$ denote the conditional density $N(F(x),\sigma^2)$ with $f_{x0}$ the true
conditional density.  For random design, Hellinger distances $d_{H}(\eta,\eta_0)$ are defined as
\begin{eqnarray*}
d_H^2(\eta,\eta_0) = \int d_h^2(f_{xF},f_{x0})G_0(dx),
\end{eqnarray*}
with $d_h^2(f_{xF},f_{x0}) = \frac{1}{2} \int \Big( \sqrt{f_{xF}} - \sqrt{f_{x0}}
\Big)^2dy$. We let $U_{\epsilon}(\eta_0)$ denote  an $\epsilon$ Hellinger neighborhood around
$\eta_0$ with respect to $d_H$.  The Kullback-Leibler  divergence between $\eta$ and $\eta_0$  is
\begin{align}
\label{eq-KLdivergence}
d_{KL}(\eta,\eta_0)&=\int \int f_{x0} \log \tfrac{ f_{x0}}{f_{xF}}dyG_0(dx).
\end{align}
An $\epsilon$ Kullback-Leibler neighborhood around $\eta_0$ is denoted by $K_{\epsilon}(\eta_0)$.

\subsection{Projective Gaussian process prior}
We first use a projective Gaussian process, $F \sim \mbox{\small{pGP}}_{\mathcal{M}}( \mu, R)$, as a prior on the monotone regression function $F(x)$ for $x \in [0,1]$. We assume the mother Gaussian process $w \sim \mbox{\small{GP}}( \mu, R)$ is continuous, with $\overline{\mathcal{H}} = C[0,1]$ the closure of the reproducing kernel Hilbert space corresponding to $R$.  Our proofs assume $\mu = 0$.
The following Theorems show posterior consistency and convergence rates under our projective Gaussian process prior in the $p=1$ special case; these Theorems can be generalized to arbitrary $p$.

\begin{theorem}
\label{th-consist1}
Let $w_0$ be in the pre-image of $F_0$ so that  $F_0=P_{w_0}$. Let $\Pi_F$ be the  projective Gaussian process prior on $\mathcal{M}[0,1]$.  Assume $\Pi_{\sigma}$ has a positive continuous density including $\sigma_0$ in its support.  Then under a random design, for all $\epsilon>0$,
\begin{equation}
\label{eq-consist1}
\Pi\big\{U_{\epsilon}^C(\eta_0)|(x_1,y_1),\ldots, (x_n, y_n)\big\}\rightarrow 0\; a.s.\;  \prod_{i=1}^nP_{f_{x_i0}},
\end{equation}
where $U_{\epsilon}^C(\eta_0)$ is the complement of $U_{\epsilon}(\eta_0)$ in $\mathcal{M}[0,1]$.
\end{theorem}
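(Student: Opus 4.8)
The plan is to establish \eqref{eq-consist1} through the classical Schwartz route, exploiting the fact that under the random design the pairs $(x_i,y_i)$ are i.i.d.\ draws from the joint density $p_\eta(x,y)=f_{xF}(y)\,g_0(x)$, where $g_0$ is the density of $G_0$. With the joint density written this way, a short computation shows that the distances $d_H$ and $d_{KL}$ of the excerpt coincide with the ordinary Hellinger and Kullback--Leibler distances between joint densities, so the problem reduces to verifying the two standard hypotheses of Schwartz's theorem together with its sieve refinement (in the style of Ghosal, Ghosh and Ramamoorthi): (i) the prior charges every Kullback--Leibler neighborhood $K_\epsilon(\eta_0)$; and (ii) there is a sequence of sieves $\mathcal{F}_n$ whose complement carries exponentially small prior mass and whose Hellinger metric entropy is $o(n)$, so that uniformly exponentially consistent tests of $\eta_0$ against $U_\epsilon^C(\eta_0)$ exist.

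First I would verify the prior-positivity condition (i), which is where the projection machinery is essential. For normal conditionals, integrating out $y$ in \eqref{eq-KLdivergence} gives
\begin{equation*}
d_{KL}(\eta,\eta_0)=\log\frac{\sigma}{\sigma_0}+\frac{\sigma_0^2+\int_0^1\{F(x)-F_0(x)\}^2\,G_0(dx)}{2\sigma^2}-\frac12,
\end{equation*}
which is below $\epsilon$ once $\|F-F_0\|_\infty$ and $|\sigma-\sigma_0|$ are small enough. Since $\Pi_F$ and $\Pi_\sigma$ are independent it suffices that each factor charges the corresponding neighborhood. For $\sigma$ this is immediate from the assumption that $\Pi_\sigma$ has a positive continuous density at $\sigma_0$. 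For $F$, the contraction inequality of Lemma \ref{lemma-projection} gives $\{w:\|w-w_0\|_\infty<\delta\}\subseteq\{w:\|P_w-P_{w_0}\|_\infty<\delta\}$, so that, using $F_0=P_{w_0}$ and pushing the mother measure forward through $P$,
\begin{equation*}
\Pi_F\{\|F-F_0\|_\infty<\delta\}=\Pi_w\{\|P_w-P_{w_0}\|_\infty<\delta\}\ge\Pi_w\{\|w-w_0\|_\infty<\delta\}.
\end{equation*}
Because the support of the mother process is $\overline{\mathcal{H}}=C[0,1]$ and $w_0$ is continuous, the right-hand side is strictly positive for every $\delta>0$, giving (i); the same positivity yields the standard a.s.\ lower bound $e^{n\epsilon}\int (p_\eta^n/p_0^n)\,d\Pi\to\infty$ on the posterior denominator via the strong law of large numbers.

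Next I would construct the sieve for (ii). Taking $w_2\equiv0$ in Lemma \ref{lemma-projection} (and noting $P_0\equiv0$) yields $\|P_w\|_\infty\le\|w\|_\infty$, so set $\mathcal{F}_n=\{P_w:\|w\|_\infty\le M_n\}\times[\underline{\sigma},\overline{\sigma}]$ with $M_n\asymp\sqrt n$ and $0<\underline{\sigma}<\sigma_0<\overline{\sigma}$. Via the pushforward bound $\Pi_F\{\|F\|_\infty>M_n\}\le\Pi_w\{\|w\|_\infty>M_n\}$ and Borell's concentration inequality for the supremum of a Gaussian process, the complement mass is $\Pi(\mathcal{F}_n^C)\le e^{-cn}$, the $\sigma$-tails being exponentially small by the same token. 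Every element of the $F$-component of $\mathcal{F}_n$ is a monotone function bounded by $M_n$ (projections are monotone and continuous by property (3)), so its $L_2(G_0)$ bracketing entropy is of order $M_n/\epsilon$, the classical bound for uniformly bounded monotone functions. Since on the bounded range $[\underline{\sigma},\overline{\sigma}]$ the map $(F,\sigma)\mapsto f_{xF}$ is Lipschitz into $d_H$, the Hellinger entropy of $\mathcal{F}_n$ is $O(\sqrt n)=o(n)$ for each fixed $\epsilon$. Feeding these two facts into the Le Cam--Birg\'e testing construction produces tests with exponentially decaying errors on $\mathcal{F}_n\cap U_\epsilon^C(\eta_0)$, and combining them with the complement and denominator bounds gives $\Pi\{U_\epsilon^C(\eta_0)\mid\text{data}\}\to0$ a.s.

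The genuinely novel step is the prior-positivity argument of the second paragraph: the remaining ingredients are the standard Schwartz machinery, but transferring the sup-norm support of the \emph{unconstrained} mother process into a sup-norm support statement for the \emph{constrained} prior is exactly what the contraction property of the projection buys, and it is the crux of the whole proof. I expect the main care to be needed in handling $\sigma$: the divergence $d_{KL}(\eta,\eta_0)$ behaves badly as $\sigma\to0$, so one must keep $\sigma$ in a fixed interval $[\underline{\sigma},\overline{\sigma}]$ bounded away from $0$ and $\infty$ both when localizing around $\sigma_0$ for (i) and when defining the sieve for (ii), which keeps all the normal-density Lipschitz constants finite and prevents the $\sigma$-component from spoiling either the entropy or the denominator bound.
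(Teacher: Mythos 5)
Your proposal is correct and follows essentially the same route as the paper: the Schwartz/sieve machinery (Kullback--Leibler prior positivity plus a sieve of uniformly bounded monotone functions with $o(n)$ Hellinger entropy and exponentially small complement prior mass), with the crux being exactly the transfer of the mother process's sup-norm support and concentration properties to the projected prior via Lemma \ref{lemma-projection} --- which is precisely the role the paper assigns to that lemma. The one step that does not follow from the stated hypotheses is your claim that the $\sigma$-tails of the sieve complement are exponentially small ``by the same token'': Borell's inequality says nothing about $\Pi_{\sigma}$, and a positive continuous density need not have exponential tails, so this requires either a compact-support or tail condition on $\Pi_{\sigma}$ (as the paper itself imposes for the rates theorem) or a separate argument for large and small $\sigma$.
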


A similar consistency theorem holds for the case of fixed designs.
\begin{theorem}
\label{th-consist2}
 We maintain the same conditions as in Theorem \ref{th-consist1}.  For the case of a fixed design, the posterior under the  projective Gaussian process prior is consistent, that is, for all $\epsilon>0$
\begin{equation}
\label{eq-consist2}
\Pi\big\{U_{\epsilon}(\eta_0)^C|(x_1,y_1),\ldots, (x_n, y_n)\big\}\rightarrow 0 \; a.s.\;  \prod_{i=1}^nP_{f_{x_i0}},
\end{equation}
where $U_{\epsilon}(\eta_0)$ is the average (empirical) Hellinger neighborhood.
\end{theorem}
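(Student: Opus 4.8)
The plan is to establish the fixed-design consistency \eqref{eq-consist2} through the framework for posterior consistency with independent, non-identically distributed observations (Choi \& Schervish (2007)), which is the natural fixed-design analogue of the Schwartz-type argument behind Theorem \ref{th-consist1}. Within this framework it suffices to verify two conditions: (i) a Kullback--Leibler prior-positivity condition phrased through the average divergence $\frac{1}{n}\sum_{i=1}^n d_{KL}(f_{x_iF},f_{x_i0})$, together with a summable-variance control on the pointwise log-likelihood ratios; and (ii) the existence of sieves $\mathcal{F}_n$ carrying all but exponentially small prior mass and admitting uniformly exponentially consistent tests of $\eta_0$ against the average-Hellinger complement $U_\epsilon(\eta_0)^C$. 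Since $\eta=(F,\sigma)$ with $\Pi=\Pi_F\Pi_\sigma$, each condition factors into a statement about $F$ and a statement about $\sigma$, and the $\sigma$ part is immediate from the assumed positive continuous density of $\Pi_\sigma$ at $\sigma_0$.

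For the Kullback--Leibler positivity, I would transfer sup-norm support from the mother process to the projective process using Lemma \ref{lemma-projection}. Since $\overline{\mathcal{H}}=C[0,1]$, the mother process charges every sup-norm ball about $w_0$, and the $1$-Lipschitz bound gives $\{\|w-w_0\|_\infty<\delta\}\subseteq\{\|P_w-F_0\|_\infty<\delta\}$, whence $\Pi_F\{\|F-F_0\|_\infty<\delta\}>0$ for all $\delta>0$. For the Gaussian model the pointwise divergence $d_{KL}(f_{xF},f_{x0})$ is an explicit function of $F(x)-F_0(x)$ and of $(\sigma,\sigma_0)$, bounded uniformly over $x$ once $\|F-F_0\|_\infty$ and $|\sigma-\sigma_0|$ are small; the same bound controls the second moment of the log-ratio. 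Averaging over the fixed design points then makes both the mean and the summed variance of the log-likelihood ratios small on the event $\{\|F-F_0\|_\infty<\delta,\ |\sigma-\sigma_0|<\delta\}$, uniformly in $n$ --- exactly the input needed to lower bound the marginal likelihood by $e^{-n\beta}$ a.s. for arbitrary $\beta>0$. This is essentially the computation used for Theorem \ref{th-consist1}, with the empirical average over $x_1,\dots,x_n$ replacing the $G_0$-integral.

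Next I would take $\mathcal{F}_n=P(\{w:\|w\|_\infty\le M_n\})$ for slowly growing $M_n$. Applying Lemma \ref{lemma-projection} with the constant function $0$ (for which $P_0\equiv 0$) gives $\|P_w\|_\infty\le\|w\|_\infty\le M_n$, so $\mathcal{F}_n$ consists of monotone functions uniformly bounded by $M_n$, whose $L^2$ metric entropy is $O(M_n/\epsilon)$ by the classical bound for bounded monotone classes. The Borell--TIS inequality makes $\Pi_F(\mathcal{F}_n^c)=\Pr(\|w\|_\infty>M_n)$ exponentially small, and an analogous truncation of $\Pi_\sigma$ yields sieves of exponentially negligible prior complement. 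These entropy bounds pass to $\{f_{xF}:F\in\mathcal{F}_n\}$ under the average Hellinger distance, from which uniformly exponentially consistent tests of $\eta_0$ against $U_\epsilon(\eta_0)^C\cap(\mathcal{F}_n\times\{|\sigma-\sigma_0|\le M_n\})$ follow by the standard Le Cam--Birg\'e construction. Combining the pieces, the posterior numerator on $U_\epsilon(\eta_0)^C$ is bounded by the exponentially small off-sieve prior mass plus the exponentially small type-II error, while the denominator exceeds $e^{-n\beta}$; choosing $\beta$ below the test exponent and invoking Borel--Cantelli delivers \eqref{eq-consist2}.

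The \emph{main obstacle}, and the point where this argument genuinely departs from Theorem \ref{th-consist1}, is that the data are independent but not identically distributed, so the marginal-likelihood lower bound cannot rely on the i.i.d.\ strong law. One must instead show $\frac{1}{n}\sum_{i=1}^n\log(f_{x_iF}/f_{x_i0})$ converges to its average mean a.s.\ via a Kolmogorov-type law for independent non-identical summands, which needs the summability condition $\sum_{i}\mathrm{Var}\!\big(\log(f_{x_iF}/f_{x_i0})\big)/i^2<\infty$ to hold uniformly over the sup-norm neighborhood. Verifying this variance control for the Gaussian log-ratios, with bounds uniform over the prior-charged functions $F$ near $F_0$, is the technical crux; once it is in place, the contraction property of the projection renders every remaining step a routine transfer from the mother Gaussian process.
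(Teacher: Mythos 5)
Your proposal is correct and follows essentially the route the paper sets up: the sup-norm contraction of Lemma \ref{lemma-projection} is used exactly as the paper intends (it is introduced as the ``key'' to transferring concentration and approximation properties of the mother process), carrying both the Kullback--Leibler prior-positivity and the sieve/entropy/testing conditions of the non-i.i.d.\ consistency framework (Ghosal \& van der Vaart (2007), Choi--Schervish-type conditions) from $w$ to $P_w$. The paper's detailed verification is deferred to a supplementary appendix not reproduced here, but your decomposition, your use of $P_0\equiv 0$ to bound the sieve, and your identification of the Kolmogorov-type law of large numbers for the fixed-design denominator as the only genuinely new ingredient relative to Theorem \ref{th-consist1} all match the intended argument.
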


%

For the rates theorem, we assume a fixed design and that $\sigma^2$ follows a log-uniform prior $\Pi_{\sigma^2}$ on a compact interval $[l,u]$ including $\sigma_0^2$ with $l>0$. Let $\phi_{w_0}(\epsilon_n)$ denote the Gaussian process concentration function defined in the Appendix.
\begin{theorem}
\label{th-rates1}
Let $F_0$ be the true monotone function and $w_0$ be any element in the pre-image of $F_0$.   Let $\Pi_F= \mbox{\small{pGP}}_{\mathcal{M}}( 0, R)$.  If $\phi_{w_0}(\epsilon_n)\leq n\epsilon_n$,  $\Pi_{\sigma^2}\left\{ \sigma_0^2\left(1-\epsilon_n^2/3,  1+\epsilon_n^2/3 \right) \right\}\geq e^{-C_1n\epsilon_n^2}$  and $\tfrac{u-l}{2l^2}\tfrac{1}{\epsilon_n^2}\leq e^{C_0n\epsilon_n^2}$ for some constants $C_1$ and $C_0$, the posterior distribution of $\eta$  satisfies
$$\Pi_n\{\eta: d_{H}(\eta,\eta_0)>M\epsilon_n|(x_1,y_1),\ldots,(x_n,y_n)\}\rightarrow 0  \; a.s.\;  \prod_{i=1}^nP_{f_{x_i0}}$$
for $M$ large enough where $d_H(\cdot, \cdot)$ is the empirical Hellinger distance.
\end{theorem}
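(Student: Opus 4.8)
The plan is to verify the three conditions of the general posterior contraction rate theorem (Ghosal, Ghosh and van der Vaart, 2000, in the independent non-identically-distributed form appropriate to fixed-design regression): a prior-mass lower bound on a Kullback--Leibler neighborhood of $\eta_0$, the existence of sieves $\mathcal{F}_n$ with $\log N(\epsilon_n,\mathcal{F}_n,d_H)\lesssim n\epsilon_n^2$, and an exponentially small bound $\Pi(\mathcal{F}_n^c)\le e^{-Cn\epsilon_n^2}$ on the prior mass outside the sieves. The decisive structural point is that $\Pi_F$ is \emph{not} a Gaussian measure but the pushforward of the mother law $\Pi_w=\mbox{\small{GP}}(0,R)$ under the projection $P$; consequently I would never verify these conditions directly for the projective prior, but would instead establish their Gaussian analogues for $w$ and transport them through $P$ using the non-expansiveness $\|P_{w_1}-P_{w_2}\|_\infty\le\|w_1-w_2\|_\infty$ of Lemma~\ref{lemma-projection}.

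For the prior-mass condition, since $F_0=P_{w_0}$ and $F=P_w$, Lemma~\ref{lemma-projection} gives the inclusion $\{w:\|w-w_0\|_\infty<\epsilon_n\}\subseteq\{w:\|P_w-F_0\|_\infty<\epsilon_n\}$, so that $\Pi_F(\|F-F_0\|_\infty<\epsilon_n)\ge\Pi_w(\|w-w_0\|_\infty<\epsilon_n)\ge e^{-\phi_{w_0}(\epsilon_n)}\ge e^{-n\epsilon_n^2}$ by the standard small-ball estimate (van der Vaart and van Zanten, 2008) under the concentration-function hypothesis $\phi_{w_0}(\epsilon_n)\le n\epsilon_n^2$. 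Because $\sigma$ ranges over the compact interval $[l,u]$ with $l>0$, a routine Gaussian-regression calculation shows that a sup-norm $\epsilon_n$-ball around $F_0$, intersected with the $\sigma^2$-neighborhood for which $\Pi_{\sigma^2}\{\sigma_0^2(1-\epsilon_n^2/3,1+\epsilon_n^2/3)\}\ge e^{-C_1n\epsilon_n^2}$ is assumed, lies inside a Kullback--Leibler neighborhood $K_{\epsilon_n}(\eta_0)$; multiplying the two independent prior masses yields $\Pi(K_{\epsilon_n}(\eta_0))\ge e^{-Cn\epsilon_n^2}$.

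For the entropy and remaining-mass conditions I would take the mother-process sieve $\mathcal{B}_n=M_n\mathbb{H}_1+\epsilon_n\mathbb{B}_1$ of van der Vaart and van Zanten, with $\mathbb{H}_1$ the unit ball of the RKHS of $R$, $\mathbb{B}_1$ the sup-norm unit ball of $C[0,1]$, and $M_n$ chosen so that $\Phi(M_n)$ matches $e^{-n\epsilon_n^2}$, and define the function sieve as its image $\mathcal{F}_n=P(\mathcal{B}_n)$ (crossed with $[l,u]$ for $\sigma$). Non-expansiveness again does the work: any sup-norm $\epsilon_n$-net of $\mathcal{B}_n$ maps under $P$ to a sup-norm $\epsilon_n$-net of $\mathcal{F}_n$, whence $\log N(\epsilon_n,\mathcal{F}_n,\|\cdot\|_\infty)\le\log N(\epsilon_n,\mathcal{B}_n,\|\cdot\|_\infty)\lesssim n\epsilon_n^2$; since $\sigma\ge l>0$ bounds the empirical Hellinger distance by a constant multiple of $\|F-F_0\|_\infty+|\sigma-\sigma_0|$, and the assumption $\tfrac{u-l}{2l^2}\tfrac{1}{\epsilon_n^2}\le e^{C_0n\epsilon_n^2}$ controls the $\sigma$-entropy, the total $d_H$-entropy is $O(n\epsilon_n^2)$. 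Likewise $\{F\notin\mathcal{F}_n\}=\{P_w\notin P(\mathcal{B}_n)\}\subseteq\{w\notin\mathcal{B}_n\}$, and Borell's inequality gives $\Pi_w(w\notin\mathcal{B}_n)\le e^{-Cn\epsilon_n^2}$, hence the same bound for $\Pi_F(\mathcal{F}_n^c)$.

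Assembling these estimates, the master theorem constructs tests with exponentially small errors from the entropy bound and combines them with the prior-mass and remaining-mass bounds to give contraction at rate $\epsilon_n$ in $d_H$ for $M$ large enough. The main obstacle is conceptual rather than computational: the projective prior falls outside the Gaussian framework on which the sharp rate results rest, and the whole argument hinges on the fact that Lemma~\ref{lemma-projection} lets every Gaussian estimate—small-ball probabilities, RKHS approximation, Borell concentration, metric entropy—be transported intact through a map that can only \emph{shrink} sup-norm distances. The most delicate technical point to get right is the metric comparison tying the empirical Hellinger distance on $(F,\sigma)$ to the sup-norm on $F$ uniformly over $\sigma\in[l,u]$, since it is this comparison, rather than any property of $P$, that dictates how the two $\sigma^2$-hypotheses enter the final rate.
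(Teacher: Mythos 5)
Your proposal is correct and follows essentially the same route the paper intends: transfer the van der Vaart--van Zanten small-ball, sieve, and entropy estimates for the mother Gaussian process through the non-expansive projection of Lemma~\ref{lemma-projection}, handle $\sigma$ via the compact support and the two stated prior-mass/entropy conditions, and invoke the Ghosal--van der Vaart fixed-design contraction theorem. The only discrepancy is harmless: you use the standard condition $\phi_{w_0}(\epsilon_n)\le n\epsilon_n^2$ where the theorem statement writes $\phi_{w_0}(\epsilon_n)\le n\epsilon_n$, which appears to be a typo in the paper.
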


Let $W_t= \mbox{\small{GP}}(0, R)$ with squared exponential covariance kernel $R(t_1,t_2)=e^{-(t_1-t_2)^2}$.  Define a scaled Gaussian process $W^A=\left(W_{At}\right)$. As an example, we consider the rate of contraction for the projection prior using $W^A$ with $A$ having a Gamma prior as in van der Vaart $\&$ van Zanten (2007, 2009).
\begin{corollary}
\label{coro1}
Let $\Pi=\Pi_F\Pi_{\sigma}$ with $\Pi_F$  the projective Gaussian process prior induced from the projection of $W^A$.   One has the following results on the convergence rate  of the posterior.
\begin{itemize}
\item [(1)] If the true monotone function $F_0\in C^{\alpha}[0,1]\bigcap \mathcal{M}[0,1]$ for some $\alpha\geq 0$, then the posterior converges at rate at least $n^{-\alpha/(2\alpha+1)}(\log n)^{(4\alpha+1)/(4\alpha+2)}$.
\item [(2)] If $F_0\in C^{0}[0,1]\bigcap \mathcal{M}[0,1]$ which is continuous but not differentiable, then the convergence rate is  at least $n^{-\alpha/(2\alpha+1)}(\log n)^{(4\alpha+1)/(4\alpha+2)}$($\alpha\geq1$), if there exists $w_0\in  C^{\alpha}[0,1]$ such that $P_{w_0}=F_0$.
\end{itemize}
\end{corollary}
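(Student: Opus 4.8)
The plan is to derive both rates as consequences of Theorem \ref{th-rates1}, so that the only genuine work is (i) verifying the two hyperprior conditions on $\sigma^2$ and (ii) bounding the mother-process concentration function $\phi_{w_0}(\epsilon_n)$ for a well-chosen $w_0$ in the pre-image of $F_0$. The conditions on $\Pi_{\sigma^2}$ are immediate for the log-uniform prior on $[l,u]$: its density is bounded away from zero on a neighborhood of $\sigma_0^2$, so the mass of an interval of half-width $\sim\epsilon_n^2$ about $\sigma_0^2$ is of order $\epsilon_n^2$, which dominates $e^{-C_1 n\epsilon_n^2}$ since $\epsilon_n$ decays only polynomially; likewise $\frac{u-l}{2l^2}\epsilon_n^{-2}$ is polynomial in $n$ and hence bounded by $e^{C_0 n\epsilon_n^2}$ because $n\epsilon_n^2\to\infty$. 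Everything therefore reduces to the concentration-function estimate $\phi_{w_0}(\epsilon_n)\le n\epsilon_n^2$.

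The key step is the choice of $w_0$. In part (1), $F_0$ is itself monotone, so by property (1) of the projection $P_{F_0}=F_0$, and I would simply take $w_0=F_0\in C^{\alpha}[0,1]$; in part (2), $F_0$ is not smooth, but we are handed a smooth pre-image $w_0\in C^{\alpha}[0,1]$ with $P_{w_0}=F_0$, and it is the smoothness of this pre-image, rather than of $F_0$, that governs the rate. In either case the centering $w_0$ lies in $C^{\alpha}[0,1]$, and I would then invoke the concentration-function bounds of van der Vaart $\&$ van Zanten (2009) for the rescaled squared exponential process $W^A$ with Gamma bandwidth: for a $C^{\alpha}$ centering in dimension one these give $\phi_{w_0}(\epsilon)\lesssim \epsilon^{-1/\alpha}\big(\log(1/\epsilon)\big)^{1+1/\alpha}$. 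Solving $\phi_{w_0}(\epsilon_n)\le n\epsilon_n^2$ for the largest admissible $\epsilon_n$ then yields $\epsilon_n = n^{-\alpha/(2\alpha+1)}(\log n)^{(4\alpha+1)/(4\alpha+2)}$, and feeding this $\epsilon_n$ back into Theorem \ref{th-rates1} delivers the stated contraction rate in the empirical Hellinger distance.

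The conceptual justification for why the mother-process concentration function suffices — even though Theorem \ref{th-rates1} already states its hypothesis in terms of $\phi_{w_0}$ — rests on Lemma \ref{lemma-projection}: the $1$-Lipschitz property $\|P_{W^A}-F_0\|_\infty=\|P_{W^A}-P_{w_0}\|_\infty\le \|W^A-w_0\|_\infty$ transfers every sup-norm small-ball event for $W^A$ about $w_0$ into the corresponding event for $P_{W^A}$ about $F_0$, while the empirical Hellinger distance between Gaussian location models is in turn controlled by the sup norm of the mean difference. The main obstacle I anticipate is book-keeping rather than structural: pinning down the exact power of the logarithm in the van der Vaart $\&$ van Zanten bound so that solving the concentration inequality reproduces the precise exponent $(4\alpha+1)/(4\alpha+2)$, and, in the boundary case $\alpha=0$ of part (1), interpreting the bound appropriately. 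Part (2) introduces no additional analytic difficulty once the smooth pre-image is assumed; its entire content is the observation that a non-differentiable monotone $F_0$ can admit a smooth pre-image, so that the projection prior adapts to the smoothness of $w_0$ and contracts faster than the nominal $\alpha=0$ regularity of $F_0$ alone would suggest.
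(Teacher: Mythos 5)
Your proposal is correct and follows the route the paper clearly intends: reduce the corollary to Theorem \ref{th-rates1} by taking $w_0=F_0$ in part (1) (valid since $P_{F_0}=F_0$ for monotone $F_0$) and the assumed smooth pre-image $w_0\in C^{\alpha}[0,1]$ in part (2), verify the two conditions on the log-uniform $\Pi_{\sigma^2}$ by noting that $\epsilon_n$ decays only polynomially while $n\epsilon_n^2\to\infty$, and import the concentration-function bound for the rescaled squared-exponential process $W^A$ with Gamma bandwidth from van der Vaart and van Zanten (2009), with Lemma \ref{lemma-projection} supplying the transfer of small-ball and approximation properties from the mother process at $w_0$ to the projection at $F_0$. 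The only blemishes are cosmetic: the hypothesis of Theorem \ref{th-rates1} should be read as $\phi_{w_0}(\epsilon_n)\le n\epsilon_n^2$ (as you in fact use), and the exponent $1+1/\alpha$ you quote for the logarithm in the bound on $\phi_{w_0}(\epsilon)$ is not the one that reproduces $(\log n)^{(4\alpha+1)/(4\alpha+2)}$ upon solving $\phi_{w_0}(\epsilon_n)\le n\epsilon_n^2$ (that requires roughly $2+1/(2\alpha)$), but since you explicitly defer to the cited result for the exact rate and flag this as bookkeeping, nothing essential is missing.
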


%
%
%

\begin{remark}
\label{rem-norm}
If the covariates are random from $G_0$, the rates hold with  norm $d_2(\eta,\eta_0)=\left[\int_0^1\{F(x)-F_0(x)\}^2G_0(dx)\right]^{1/2}+|\sigma-\sigma_0|$.
\end{remark}

\subsection{Inference by projecting the Gaussian process posterior}

In this section we propose an alternative approach that relies on
projecting draws from the  posterior under a Gaussian process prior onto the
space of monotone functions $\mathcal{M}[0,1]^p$. This approach is easy to implement and has excellent performance in applications we have considered.

 We first impose a  Gaussian process on $F$ and a prior $\Pi_{\sigma}$ on $\sigma$, and then project the posterior of $F$ onto $\mathcal{M}[0,1]^p$.  This induces a probability measure on $\mathcal{M}[0,1]^p$ based on which the inference is carried out. We denote by $\widetilde{\Pi}(\cdot|y_1,\ldots, y_n)$  the induced distribution on $\Gamma=\mathcal{M}[0,1]^p\times (0,\infty)$ . We first present the following Theorem which shows the existence of a prior on  $\Gamma$  whose posterior is $\widetilde{\Pi}(\cdot|y_1,\ldots, y_n)$. Hence,  our inference scheme fits in the Bayesian paradigm. Assume $\sigma$ is compactly supported.

\begin{theorem}
 Given $\widetilde{\Pi}(\cdot|y_1,\ldots, y_n)$, a probability measure on $\Gamma$  obtained by projecting the posterior of a Gaussian process onto $\mathcal{M}[0,1]^p$, there  exists a prior $\widetilde{\Pi}(dF, d\sigma)$  on $\Gamma$ whose posterior is $\widetilde{\Pi}(\cdot|y_1,\ldots, y_n)$.
\end{theorem}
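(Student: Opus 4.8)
The plan is to exploit the fact that the sampling model \eqref{eq-model} enters only through the finite-dimensional vector $(F(x_1),\dots,F(x_n))$ and $\sigma$, so that the monotone-regression likelihood is the explicit, everywhere-positive function
\[
L_n(F,\sigma)=(2\pi\sigma^2)^{-n/2}\exp\!\Big(-\tfrac{1}{2\sigma^2}\textstyle\sum_{i=1}^n\{y_i-F(x_i)\}^2\Big),\qquad (F,\sigma)\in\Gamma .
\]
Since $\widetilde{\Pi}(\cdot\mid y_1,\dots,y_n)$ is by construction a genuine probability measure on $\Gamma$, the natural strategy is to \emph{divide out} this likelihood. I would define the candidate prior by
\[
\widetilde{\Pi}(dF,d\sigma)=\tfrac{1}{C}\,L_n(F,\sigma)^{-1}\,\widetilde{\Pi}(dF,d\sigma\mid y_1,\dots,y_n),\qquad C=\int_\Gamma L_n^{-1}\,d\widetilde{\Pi}(\cdot\mid y_1,\dots,y_n).
\]
Because $L_n$ is a product of Gaussian densities it is strictly positive on all of $\Gamma$, so $L_n^{-1}$ is a finite, positive, measurable function and the integrand above is well defined.

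Granting for the moment that $C<\infty$, the verification that this prior reproduces the target measure is immediate and constitutes the routine part of the argument. Applying Bayes' formula to $\widetilde{\Pi}(dF,d\sigma)$ under the likelihood $L_n$ gives an unnormalized posterior $L_n(F,\sigma)\,\widetilde{\Pi}(dF,d\sigma)=C^{-1}\,\widetilde{\Pi}(dF,d\sigma\mid y)$, whose total mass is $\int_\Gamma L_n\,d\widetilde{\Pi}=C^{-1}\int_\Gamma d\widetilde{\Pi}(\cdot\mid y)=C^{-1}\in(0,\infty)$; upon normalizing, the two factors of $C^{-1}$ cancel and the posterior equals $\widetilde{\Pi}(\cdot\mid y_1,\dots,y_n)$ exactly. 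Thus the entire substantive content of the theorem reduces to showing that the constructed measure is a bona fide prior, i.e.\ that the normalizing constant $C$ is finite.

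The main obstacle is therefore to establish $C<\infty$, and this is where the stated assumptions must be used. Writing $\widetilde{\Pi}(\cdot\mid y)$ as the pushforward of the Gaussian process posterior under the map $(w,\sigma)\mapsto(P_w,\sigma)$ and then expressing that posterior as prior times likelihood divided by the marginal $Z$, one finds that the $(2\pi\sigma^2)^{\pm n/2}$ factors cancel and
\[
C=Z^{-1}\!\int \exp\!\Big(\tfrac{1}{2\sigma^2}\textstyle\sum_{i=1}^n[\{y_i-P_w(x_i)\}^2-\{y_i-w(x_i)\}^2]\Big)\,d\Pi_w(w)\,d\Pi_\sigma(\sigma).
\]
To control this I would combine three ingredients. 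First, the hypothesis that $\sigma$ is compactly supported, say on $[l,u]$ with $l>0$, bounds $1/\sigma^2\le 1/l^2$. Second, the min--max representation \eqref{projection} shows $\inf_t w(t)\le P_w(x)\le\sup_t w(t)$, whence $\sup_x|P_w(x)|\le\sup_t|w(t)|$, so the exponent is dominated by a quadratic in $\sup_t|w(t)|$ with coefficients depending on $l$, $n$ and the $y_i$. Third, Fernique's theorem for the mother process gives $\int\exp(\alpha\sup_t|w(t)|^2)\,d\Pi_w(w)<\infty$ for $\alpha$ below the Fernique threshold of $w$. The delicate point is precisely the balance between the leading coefficient $\asymp n/(2l^2)$ and that threshold: finiteness of $C$ (hence properness of the prior) is where the compactness of the support of $\sigma$ and the sub-Gaussian tails of the mother process must be played off against each other. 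The contraction Lemma~\ref{lemma-projection} enters in a supporting role, guaranteeing that $w\mapsto P_w$ is continuous so that the pushforward and the integrand are measurable and the manipulations above are valid. I would finally note a robust fallback: even if $C$ were infinite, taking the unnormalized measure $L_n^{-1}\,\widetilde{\Pi}(\cdot\mid y)$ as a (possibly improper) prior still yields a finite, positive marginal likelihood equal to $1$, so its formal posterior is again $\widetilde{\Pi}(\cdot\mid y_1,\dots,y_n)$; the compact-support-plus-Fernique estimate is exactly what upgrades this to a proper prior.
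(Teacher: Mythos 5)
Your construction is exactly the paper's: the paper defines the prior as $g(F,\sigma\mid y)\,\widetilde{\Pi}(dF,d\sigma\mid y)$ with $g(F,\sigma\mid y)=f(y\mid F,\sigma)^{-1}\bigl\{\int_{\Gamma}f(y\mid F,\sigma)^{-1}\,\widetilde{\Pi}(dF,d\sigma\mid y)\bigr\}^{-1}$, i.e.\ the posterior divided by the likelihood and renormalized, and it verifies the Bayes identity by the same cancellation you describe. The only difference is that the paper merely asserts finiteness of the normalizing constant $\int_{\Gamma}f(y\mid F,\sigma)^{-1}\,\widetilde{\Pi}(dF,d\sigma\mid y)$ without argument, whereas you correctly single it out as the one substantive point and sketch a compact-support-plus-Fernique bound for it.
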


Let $F_0\in \mathcal{M}[0,1]^p$ be the true monotone function.    Let $\eta=(F,\sigma)$ and $\eta_0=(F_0, \sigma_0)$. In proving the theory, we consider the random design with the covariates sampled from a distribution $G_0$ with distance $d_2(\eta,\eta_0)$ the same as in Remark \ref{rem-norm}. Theory is shown for the special case $p=1$ which can be generalized to arbitrary $p$.

Since the covariates are from a distribution $G_0$, we consider the projection of the function $w(t)$ onto the monotone space by minimizing
\begin{equation}
\label{eq-randomPJ}
\int_0^1 \big\{w(t)-F(t)\big\}^2G_0(dt).
\end{equation}
The solution to \eqref{eq-randomPJ} is given by
\begin{equation}
\label{eq-randsol}
P_w(x)=\inf_{v\geq x} \sup_{u\leq x} \dfrac{1}{G_0(v)-G_0(u)}\int_u^vw(t)G_0(dt),\; \text{for}\; x\in[0,1],
\end{equation}
which is a weighted version of \eqref{projection}. In terms of implementing the projection, one can use the pooled adjacent violators algorithm with non-constant weights.
The following lemma shows continuity of the projection.
\begin{lemma}
\label{lemma-contragcm}
Let $w_1$ and $w_2$ be two functions on [0,1]. Then one has
\begin{equation}
||P_{w_1}-P_{w_2}||_{2G_0(dx)}\leq ||w_1-w_2||_{2G_0(dx)}
\end{equation}
where $||f-g||_{2G_0(dx)}=\{\int (f-g)^2G_0(dx)\}^{1/2}$.
\end{lemma}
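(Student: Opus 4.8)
The plan is to recognize this lemma as the non-expansiveness of the metric projection onto a closed convex set in a Hilbert space. I would work in $H = L^2([0,1],G_0)$ with inner product $\langle f,g\rangle = \int_0^1 f(t)g(t)\,G_0(dt)$ and induced norm $\|\cdot\| = \|\cdot\|_{2G_0(dx)}$. By the solution formula \eqref{eq-randsol}, $P_w$ is precisely the minimizer of $\int_0^1\{w(t)-F(t)\}^2 G_0(dt) = \|w-F\|^2$ over $F\in\mathcal{M}[0,1]$; that is, $P_w$ is the orthogonal projection of $w$ onto $\mathcal{M}[0,1]$. Since $\mathcal{M}[0,1]$ is a closed convex cone (as already noted for the surface case), the projection is well-defined and unique, so I may invoke the standard variational characterization.

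The first step is to establish the variational inequality: for every $F\in\mathcal{M}[0,1]$,
\[
\langle w - P_w,\, F - P_w\rangle \leq 0.
\]
This follows from convexity alone. For $F\in\mathcal{M}[0,1]$ and $t\in[0,1]$, the point $(1-t)P_w + tF$ again lies in $\mathcal{M}[0,1]$, so $g(t)=\|w-(1-t)P_w-tF\|^2$ is minimized at $t=0$; expanding $g$ as a quadratic in $t$ and imposing $g'(0)\geq 0$ yields the displayed inequality.

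The second step applies this twice. Writing $a=P_{w_1}$ and $b=P_{w_2}$, taking $w=w_1$, $F=b$ gives $\langle w_1-a,\,b-a\rangle\leq 0$, and taking $w=w_2$, $F=a$ gives $\langle w_2-b,\,a-b\rangle\leq 0$. Rewriting the second as $\langle b-w_2,\,b-a\rangle\leq 0$ and adding the two yields
\[
\|a-b\|^2 \leq \langle w_1-w_2,\, a-b\rangle,
\]
and the Cauchy--Schwarz inequality then gives $\|a-b\|^2 \leq \|w_1-w_2\|\,\|a-b\|$, hence $\|P_{w_1}-P_{w_2}\|\leq\|w_1-w_2\|$, which is the claim.

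I expect no deep obstacle, since the argument is essentially the textbook non-expansiveness of convex projections; the only points requiring genuine care are the Hilbert-space prerequisites. Specifically, one must confirm that $\mathcal{M}[0,1]$ is truly closed in $L^2(G_0)$ so the projection exists and is unique (this is where I would lean on the existence and uniqueness already furnished by Rychlik's theorem together with the closed-convex-cone structure), and verify through \eqref{eq-randsol} that the explicit min--max formula coincides with the $L^2(G_0)$ metric projection rather than a projection in some other norm. Once those identifications are in place, the inequality is immediate.
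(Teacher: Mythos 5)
Your proof is correct: it is the standard non-expansiveness of the metric projection onto a closed convex set in the Hilbert space $L^2(G_0)$, obtained from the variational inequality $\langle w-P_w, F-P_w\rangle\le 0$ applied twice plus Cauchy--Schwarz. The paper does not print a proof of this lemma in the appendix supplied here, but the machinery it uses elsewhere (the characterization $\langle w-P_w,f\rangle\le 0$ and $\langle w-P_w,P_w\rangle=0$ from Rychlik's Theorem~1, invoked in the proof of Lemma~A1) is exactly the cone form of your variational inequality, so your argument is the intended one; your flagged prerequisites (closedness of $\mathcal{M}[0,1]$ in $L^2(G_0)$ and the identification of \eqref{eq-randsol} with the $L^2(G_0)$ minimizer) are the right points to check and both hold.
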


%

%

\begin{theorem}
Given the scaled Gaussian process $W^A$ with $A$ from some Gamma distribution, the convergence rates of $\widetilde{\Pi}$ with respect to $d_2(\eta, \eta_0)$ are given as follows:
\begin{itemize}
\item [(1)] If the true monotone function $F_0\in C^{\alpha}[0,1]\bigcap \mathcal{M}[0,1]$, then the posterior converges at rate at least $n^{-\alpha/(2\alpha+1)}(\log n)^{(4\alpha+1)/(4\alpha+2)}$.
    \item [(2)] If $F_0$ is a flat function, so that $F_0=C$ for some constant $C$, then the rate of convergence is at least $n^{-1/2}(\log n)^2.$
\end{itemize}
\end{theorem}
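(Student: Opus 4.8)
The plan is to use the fact that the projection $w\mapsto P_w$ is non-expansive in the $||\cdot||_{2G_0(dx)}$ norm and fixes the monotone truth $F_0$, so that the projected posterior contracts at least as fast as the ordinary unconstrained posterior. Let $Q_n(\cdot\mid y_1,\dots,y_n)$ denote the ordinary posterior on $(w,\sigma)$ obtained from the $W^A$ prior. By construction $\widetilde{\Pi}(\cdot\mid y_1,\dots,y_n)$ is its image under the (continuous, hence measurable) map $(w,\sigma)\mapsto(P_w,\sigma)$; that is, $\widetilde{\Pi}(B\mid\text{data})=Q_n(\{(w,\sigma):(P_w,\sigma)\in B\}\mid\text{data})$ for every Borel $B\subset\Gamma$.

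I would then take $B=\{\eta:d_2(\eta,\eta_0)>M\epsilon_n\}$ and pull it back through the map. Because $F_0$ is monotone, property (1) of the projection gives $P_{F_0}=F_0$, and combining this with the contraction Lemma \ref{lemma-contragcm} yields, for every draw $w$,
$$||P_w-F_0||_{2G_0(dx)}=||P_w-P_{F_0}||_{2G_0(dx)}\le||w-F_0||_{2G_0(dx)},$$
hence $d_2((P_w,\sigma),\eta_0)\le d_2((w,\sigma),\eta_0)$. This gives the set inclusion $\{(w,\sigma):d_2((P_w,\sigma),\eta_0)>M\epsilon_n\}\subseteq\{(w,\sigma):d_2((w,\sigma),\eta_0)>M\epsilon_n\}$, and therefore $\widetilde{\Pi}\{d_2(\eta,\eta_0)>M\epsilon_n\mid\text{data}\}\le Q_n\{d_2((w,\sigma),\eta_0)>M\epsilon_n\mid\text{data}\}$. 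Thus it suffices to control the right-hand side, i.e.\ to show the unconstrained $W^A$ posterior contracts at the claimed rate in $d_2$.

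For that I would invoke the adaptive contraction-rate theory for the rescaled squared-exponential process with Gamma bandwidth (van der Vaart $\&$ van Zanten (2007, 2009)), exactly as in Corollary \ref{coro1}, now applied to the unconstrained prior with true regression function $F_0$. Since $\sigma$ is compactly supported and bounded away from $0$, the Hellinger distance in which those rates are stated is equivalent to $d_2$ up to constants, so the concentration-function bound $\phi_{F_0}(\epsilon_n)\le n\epsilon_n^2$ together with the usual prior-mass and sieve conditions delivers $\epsilon_n=n^{-\alpha/(2\alpha+1)}(\log n)^{(4\alpha+1)/(4\alpha+2)}$ when $F_0\in C^\alpha[0,1]\cap\mathcal{M}[0,1]$, giving (1). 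For (2), the same reduction applies verbatim since a constant is monotone and $P_{F_0}=F_0$; the near-parametric rate $n^{-1/2}(\log n)^2$ then follows from the concentration function of $W^A$ evaluated at $F_0\equiv C$.

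I expect the main obstacle to lie not in the projection reduction, which is short once Lemma \ref{lemma-contragcm} and $P_{F_0}=F_0$ are in hand, but in the clean transfer of the unconstrained $W^A$ rates into $d_2$: one must verify the equivalence of $d_2$ with the Hellinger metric uniformly over the compact range of $\sigma$, and confirm that the precise poly-logarithmic factor—in particular the $(\log n)^2$ of the flat case, which exceeds the $\alpha\to\infty$ limit of part (1)—emerges from the concentration-function computation for the Gamma-rescaled squared-exponential kernel rather than from a naive limiting argument.
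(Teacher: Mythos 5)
Your overall strategy is the one the paper's development is built around: $\widetilde{\Pi}(\cdot\mid y_1,\ldots,y_n)$ is the pushforward of the unconstrained $W^A$ posterior under $(w,\sigma)\mapsto(P_w,\sigma)$, the projection fixes the monotone truth ($P_{F_0}=F_0$), and Lemma \ref{lemma-contragcm} makes the projection non-expansive in $\|\cdot\|_{2G_0(dx)}$, so $d_2((P_w,\sigma),\eta_0)\le d_2((w,\sigma),\eta_0)$ and the projected posterior inherits whatever $d_2$-rate the unconstrained posterior achieves. This is exactly why Lemma \ref{lemma-contragcm} is stated immediately before the theorem, and you correctly exploit the advantage of projecting the posterior rather than the prior: the concentration function is evaluated at $F_0$ itself, so no smooth pre-image is needed (contrast Corollary \ref{coro1}(2)).

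The one step that will not survive as written is the assertion that the Hellinger distance is ``equivalent to $d_2$ up to constants'' because $\sigma$ is compactly supported. It is not: $d_H$ is bounded by $1$ while $\int(F-F_0)^2\,G_0(dx)$ is not, and for normal location families $d_h^2(f_{xF},f_{x0})\asymp\min\{(F(x)-F_0(x))^2,1\}$, so Hellinger contraction only controls the $G_0$-measure of the set where $|F-F_0|$ is large, not its $L_2(G_0)$ norm. To obtain the stated $d_2$ rates you need either the versions of the van der Vaart--van Zanten results formulated directly for the $L_2(G_0)$ (or empirical $L_2$) norm, or a truncation/sieve argument showing the unconstrained posterior concentrates on functions with $\|w\|_\infty\le M_n$ for slowly growing $M_n$, after which $d_2\lesssim M_n\, d_H$ on the sieve and the extra factor is absorbed into the logarithms. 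This is a standard repair, and you do flag the spot as the main obstacle, but the ``equivalence up to constants'' shortcut is false as stated. The remaining ingredients --- the $(\log n)^{(4\alpha+1)/(4\alpha+2)}$ factor for $C^{\alpha}$ truths and the near-parametric $n^{-1/2}(\log n)^{2}$ rate for a constant $F_0$, which lies in (the closure of) the reproducing kernel Hilbert space of the squared-exponential kernel --- are indeed read off from the concentration-function computations of van der Vaart and van Zanten (2007, 2009), as you indicate.
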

Our theory of projecting the posteriors applies naturally to the higher dimensional cases.

\section{Posterior computation}
\label{simu}
  \subsection{ Monotone estimation of curves with simulated data}
 We apply the approach proposed in \S3$\cdot$3 by projecting the posterior of a Gaussian process.  Let $w\sim \mbox{\small{GP}}(0, R)$ with $R(x_1,x_2) = \beta^{-1}\exp\{ -\gamma (x_1-x_2)^2 \}$,  where $\beta \sim \mbox{Ga}( 4,1)$, $\gamma \sim \mbox{Ga}(4,1)$ and $\sigma^{-2} \sim \mbox{Ga}(4,1)$.  In
a first stage, we run a Markov chain Monte Carlo algorithm to obtain draws from the joint posterior of covariance parameters
$(\beta,\gamma,\sigma)$ and the pre-projection curve evaluated at the data points $ w_n^* = \{
w(x_1),\ldots,w(x_n) \}$.  This can proceed using any of a wide variety of algorithms developed for Gaussian process regression models; we use Vanhatalo et al. (2012, arXiv:1206.5754v1). The number of Markov chain Monte Carlo iterations is taken to be 5,000 with a burn in of 1,000.  After convergence, sample paths $ w_n^* $ are then projected to the monotone space using the pooled adjacent violators algorithm.

Data of size $n=100$ are simulated from a normal error model with standard deviation $\sigma=1$. The true mean functions given below are proposed by \cite{holmes} and \cite{neelon} and  are also used in a comparative study in  \cite{shiverly09}.
  \begin{itemize}
  \item [(a)] $F_1(x)=3$, $x\in(0,10]$ (flat function).
  \item [(b)] $F_2(x)=0.32\{x+\sin(x)\}$, $x\in(0,10]$ (sinusoidal function).
  \item [(c)] $F_3(x)=3$ if $x\in(0,8]$ and $F_3(x)=6$ if $x\in (8,10]$ (step function).
  \item [(d)] $F_4(x)=0.3x$, $x\in(0,10]$ (linear function).
  \item [(e)] $F_5(x)=0.15\exp(0.6x-3)$, $x\in(0,10]$ (exponential function).
  \item [(f)] $F_6(x)=3/\left\{1+\exp(-2x+10)\right\}$, $x\in(0,10]$  (logistic function).
  \end{itemize}

The $x$ values are taken to be equidistant in the interval $(0,10]$. The root mean squared error of the estimates is calculated in the simulation study for the Gaussian process with and without projection, with the results shown in Table 1.  The results  presented in the following table are the average root mean squared error of 50 samples of data. We compare our results with  the root mean squared error results of the regression spline provided in \cite{shiverly09}.
\begin{table}[ht]
\caption{Root mean square error for simulated data with $n=100$ and the results averaged across 50 simulation replicates in each case}
\begin{tabular}{|p{7.2em}|p{3.5em}p{4em}p{3.5em}p{4em}p{4.5em}p{3.8em}|}
\hline
    & flat & sinusoidal & step & linear & exponential & logistic \\
    \hline
    Gaussian process         & 0.151    & 0.219    & 0.271     &  0.167     &0.197     & 0.255    \\
\hline
Gaussian process projection & 0.113 & 0.211  &  0.253  &  0.163    &  0.191  &  0.224 \\
\hline
regression spline &0.097  & 0.229  &  0.285 & 0.240  & 0.213 & 0.194\\
\hline
\end{tabular}
\label{mse}
\end{table}
Figs \ref{sinu} and \ref{expo} show projection estimates and 99\% pointwise credible intervals for some of the regression functions and randomly selected simulated data sets along with the true curves. In each case the estimated curve was close to the truth and 99\% intervals mostly enclosed the true curves.

%
%

\begin{figure}[ht]
\caption{Dash lines correspond to true curves, circles are data points, solid lines are posterior mean curves under the Gaussian process projection, dashes with dots are 99$\%$ pointwise credible intervals.}
\includegraphics[width=10cm]{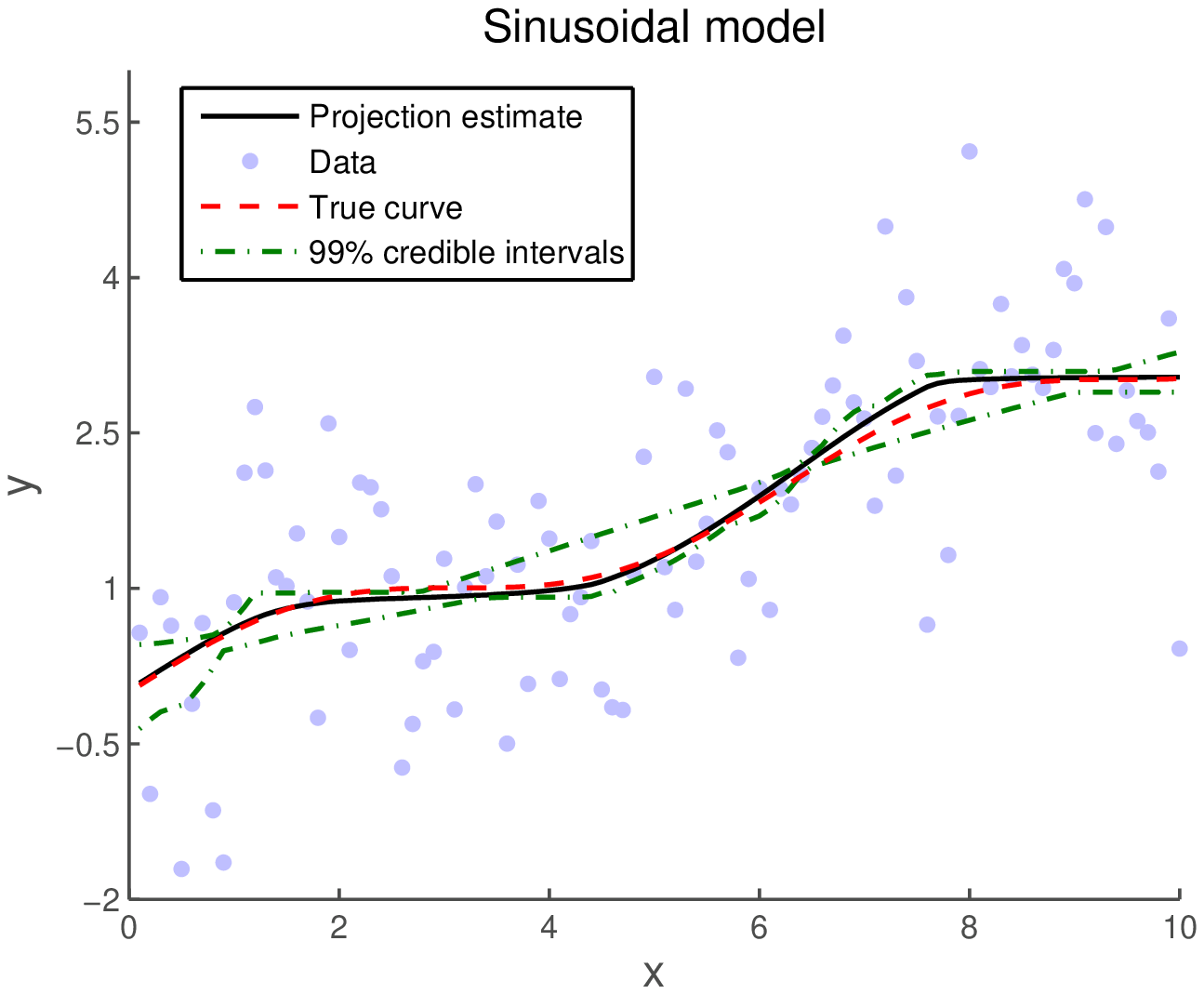}
\label{sinu}
\end{figure}


%

\begin{figure}[ht]
\caption{Dash lines correspond to true curves, circles are data points, solid lines are posterior mean curves under the Gaussian process projection, dashes with dots are 99$\%$ pointwise credible intervals.}
\includegraphics[width=10cm]{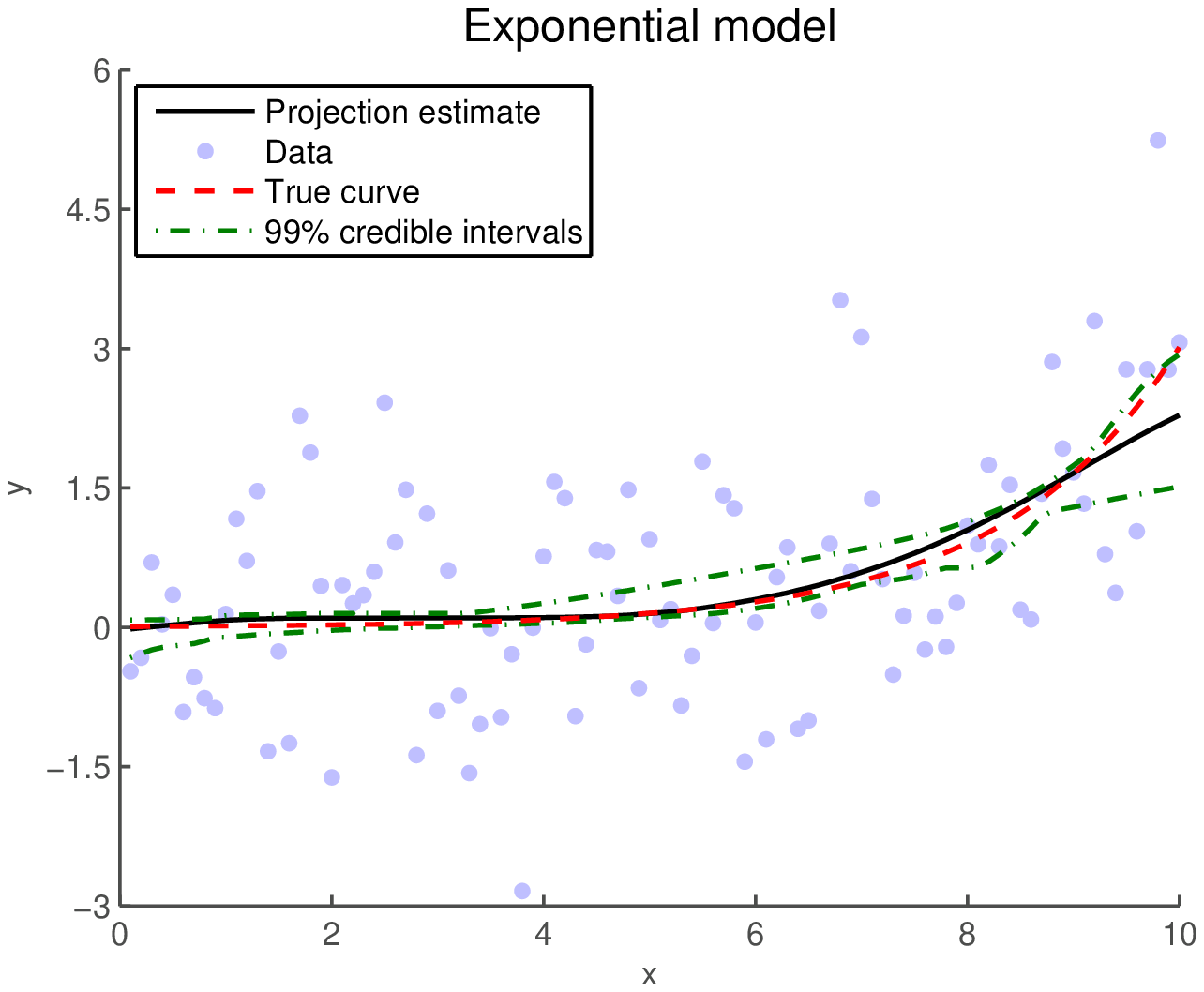}
\label{expo}
\end{figure}

  \subsection{Monotone estimation of surfaces }

In this section, we consider estimation of monotone surfaces. We choose a Gaussian process prior with covariance kernel $R(x,  x')=\beta^{-1}\exp\{ -\sum_{k=1}^2 \gamma_k(x_k-x_k')^2\}$, the posteriors of which are then projected to the monotone space.  The hyperpriors are independent with  $\beta \sim \mbox{Ga}(4,1)$, $\gamma_1 \sim \mbox{Ga}(4,1)$, $\gamma_2 \sim \mbox{Ga}(4,1)$ and $\sigma^{-2}\sim \mbox{Ga}(4,1)$.   The Markov chain Monte Carlo algorithm was run for 3,000 iterations, with the initial 500 iterations discarded. The pre-projection curve is first evaluated at the $m_1m_2$ points  $w(s_i, t_j)$ with $i=1,\ldots m_1$ and $j=1,\ldots m_2$ which are then projected to the space of monotone surfaces. We briefly describe the projection scheme in the following steps in which $w$ is only evaluated at the points $(s_i,t_j)$. This projection scheme was first introduced in \cite{roberston88} for their matrix partial ordering data.
\begin{itemize}
\item[Step 1] For any $t_j$ ($j=1,\ldots, m_2$), project $w$ along the $s$ direction by applying the pooled adjacent violators algorithm  to each vector $\{w(s_1,t_j), w(s_2,t_j),\ldots, w(s_{m_1},t_j)\}$. Denote the projection of $w$ by $\widehat{w}^{(1)}$.  Calculate the residual $S^{(1)}=\widehat{w}^{(1)}-w.$

\item[Step 2] F For any $s_i$ ($i=1,\ldots, m_1$), project $w+S^{(1)}$ along the $t$ direction using the pooled adjacent violators algorithm, calculate the residual $T^{(1)}=\widetilde{w}^{(1)}-(w+S^{(1)}).$

\item[Step 3] F  Iterate Step 1 and Step 2 by starting projecting $w+T^{(1)}$ along the $s$ direction. In the $i$th iteration ($i=1,\cdots,k$), $\widehat{w}^{(i)}$ is obtained by projecting $w+T^{(i-1)}$ along the $s$ direction and $\widetilde{w}^{(i)}$ is obtained by projecting $w+S^{(i)}$ along the $t$ direction.

\end{itemize}
For all our examples, this algorithm, which is a finite approximation to Algorithm 1, converged to a monotone solution in under 20 iterations.
By the proof of Theorem \ref{th-surfsol}, one can show that the solution obtained using the above projection scheme converges to the solution minimizing $\sum_{i=1}^{m_1}\sum_{j=1}^{m_2}\{w(s_i,t_j)-F(s_i,t_j)\}^2$  over the class of $F$ that are monotone with respect to the partial ordering on $(s_i,t_j)$.

  In the first seven examples, data of size $n=1,024$ are simulated from  a normal error model with true error $\sigma=0.5$, 0.1 and true mean regression surfaces  $F_1$ $-$ $F_7$, the first six of which are also used in \cite{Saarela}. The model fit is checked for each of our models in terms of posterior mean $\sigma$, the standard deviation of the posterior mean residuals, the correlations between the true and the posterior mean residuals and the correlation between the true and posterior mean predicted responses. We also look at the discrepancy between the true and estimated surface in terms of the mean squared error of our estimates. The results shown in Table \ref{tfit} indicate  good model fit using our projection estimates.
For the case when $\sigma=0.1$, the estimates of some models are plotted below in Figures \ref{figs1} and \ref{figs2} with the corresponding true surfaces. More plots are available in the supplementary appendix including more models and the higher noise case with $\sigma=0.5$.
  \begin{table}[ht]
\caption{  $\sigma$-true normal error; $F$-true surface; $\bar{\sigma}$-posterior mean $\sigma$; $SD(\bar{\epsilon})$-standard deviation of posterior mean residuals; cor($\epsilon,\bar{\epsilon})$-correlation between true and posterior mean residuals; cor($y,\bar{y}$)-correlation between true and posterior mean predicted responses; $mse$-mean squared error.}
\begin{tabular}{|p{2.5em}|p{2em}p{4em}p{3.5em}p{4.2em}p{4.5em}p{4em}|}
\hline
  $\sigma$  & $F$ &  $\bar{\sigma}$ & SD($\bar{\epsilon}$) & cor($\epsilon$, $\bar\epsilon$) & cor($y$, $\bar y$) & mse\\
\hline
0.5 & $F_1$ & 0.5031 & 0.5029 & 0.9962 & 0.9893 & 0.0014\\

    &  $F_2$ & 0.4993 & 0.4993 & 0.9998 & 0.9990& 0.0004\\
     &  $F_3$ & 0.4997& 0.4996 & 0.9946 & 0.9821 & 0.0016 \\
      &  $F_4$ & 0.4937 & 0.4932 & 0.9745 & 0.9463 & 0.0035\\
       &  $F_5$ & 0.5113 & 0.5110 & 0.9767 & 0.9328 & 0.0035\\
        &  $F_6$ & 0.5014 & 0.5009 & 0.9879 & 0.9758 & 0.0025\\
         &  $F_7$ &0.5034 & 0.5028 & 0.9976 & 0.9907 & 0.0011 \\
\hline
0.1 & $F_1$ &  0.0997 & 0.0996 & 0.9909  & 0.9986 & 0.0004\\

    &  $F_2$ & 0.1050 & 0.1050 & 0.9987 & 0.9997& 0.0002\\
     &  $F_3$ & 0.1008 & 0.1007 & 0.9859 & 0.9976 & 0.0005\\
      &  $F_4$ &0.1205 & 0.1204 & 0.8457 & 0.9821 & 0.0020 \\
       &  $F_5$ & 0.1145& 0.1143 & 0.8554& 0.9808 & 0.0019\\
        &  $F_6$ & 0.1045 & 0.1044 & 0.9384 & 0.9949 & 0.0010\\
         &  $F_7$ & 0.0997 & 0.0996 & 0.9977 & 0.9996 & 0.0002 \\
         \hline
\end{tabular}
\label{tfit}
\end{table}

\begin{figure}[ht]
\caption{True surface $F_1$ and its estimate.}
\includegraphics[width=8cm]{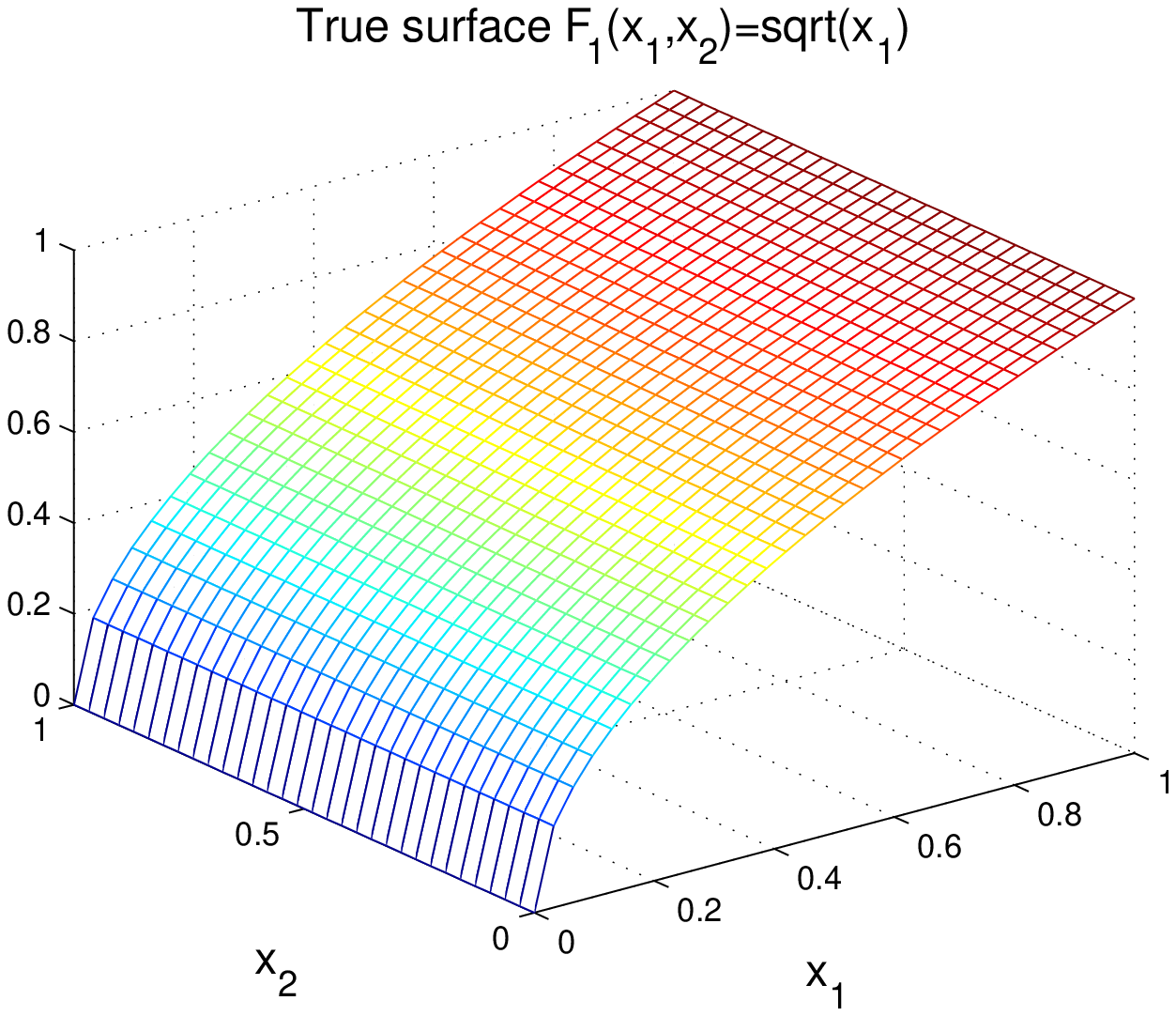}
\includegraphics[width=8cm]{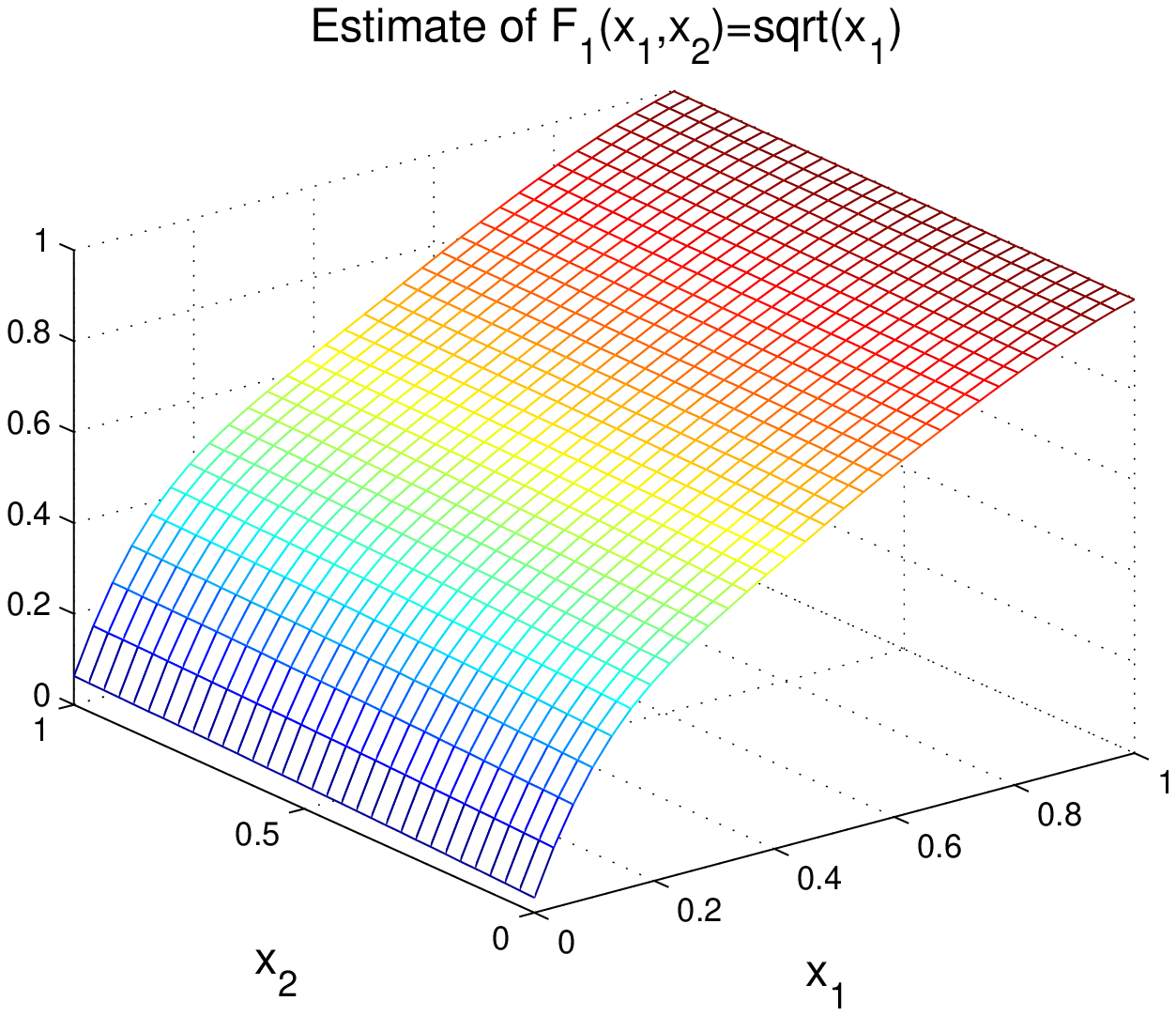}
\label{figs1}
\end{figure}

%

%


%

\begin{figure}[ht]
\caption{True surface $F_6$ and its estimate.}
\includegraphics[width=8cm]{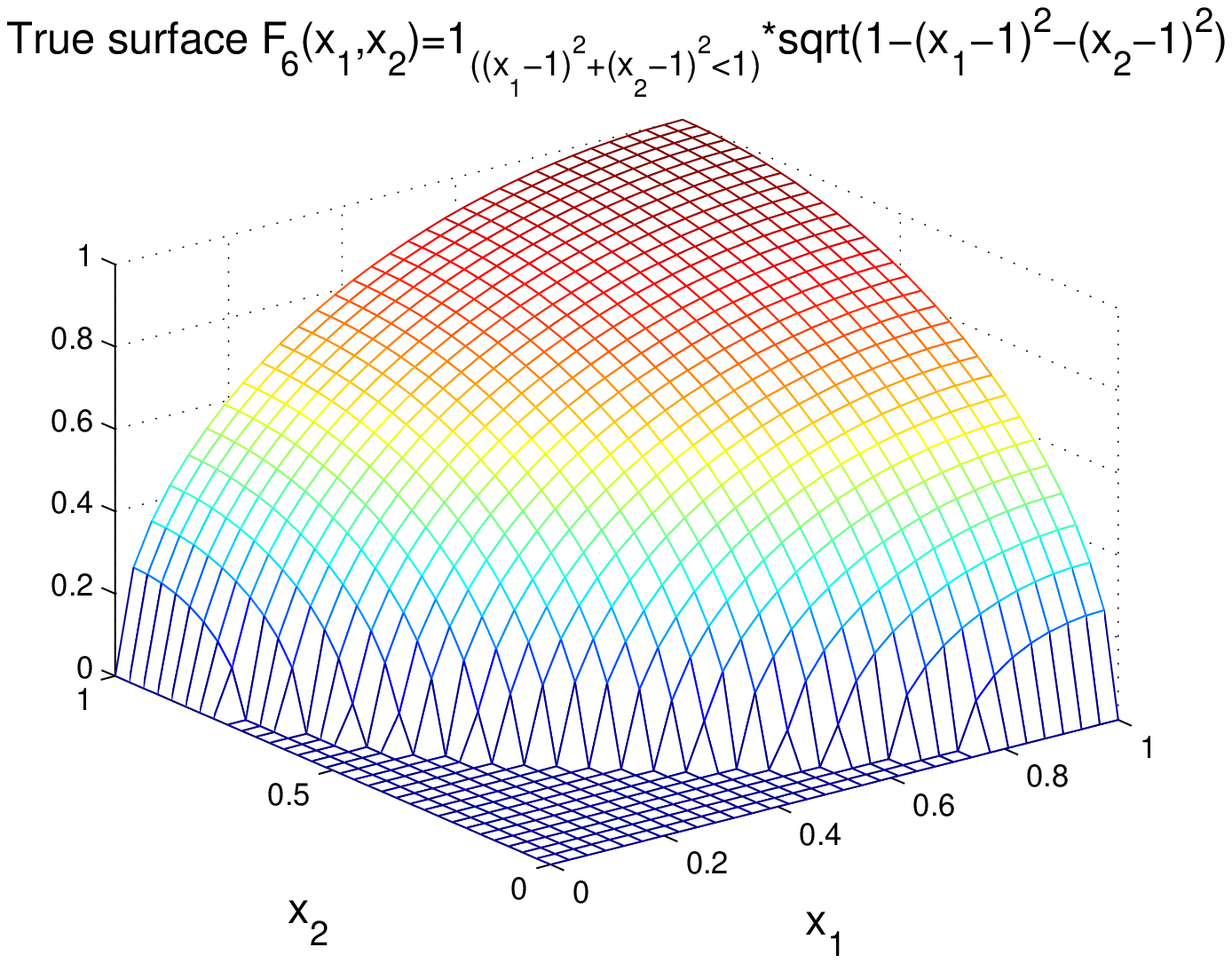}
\includegraphics[width=8cm]{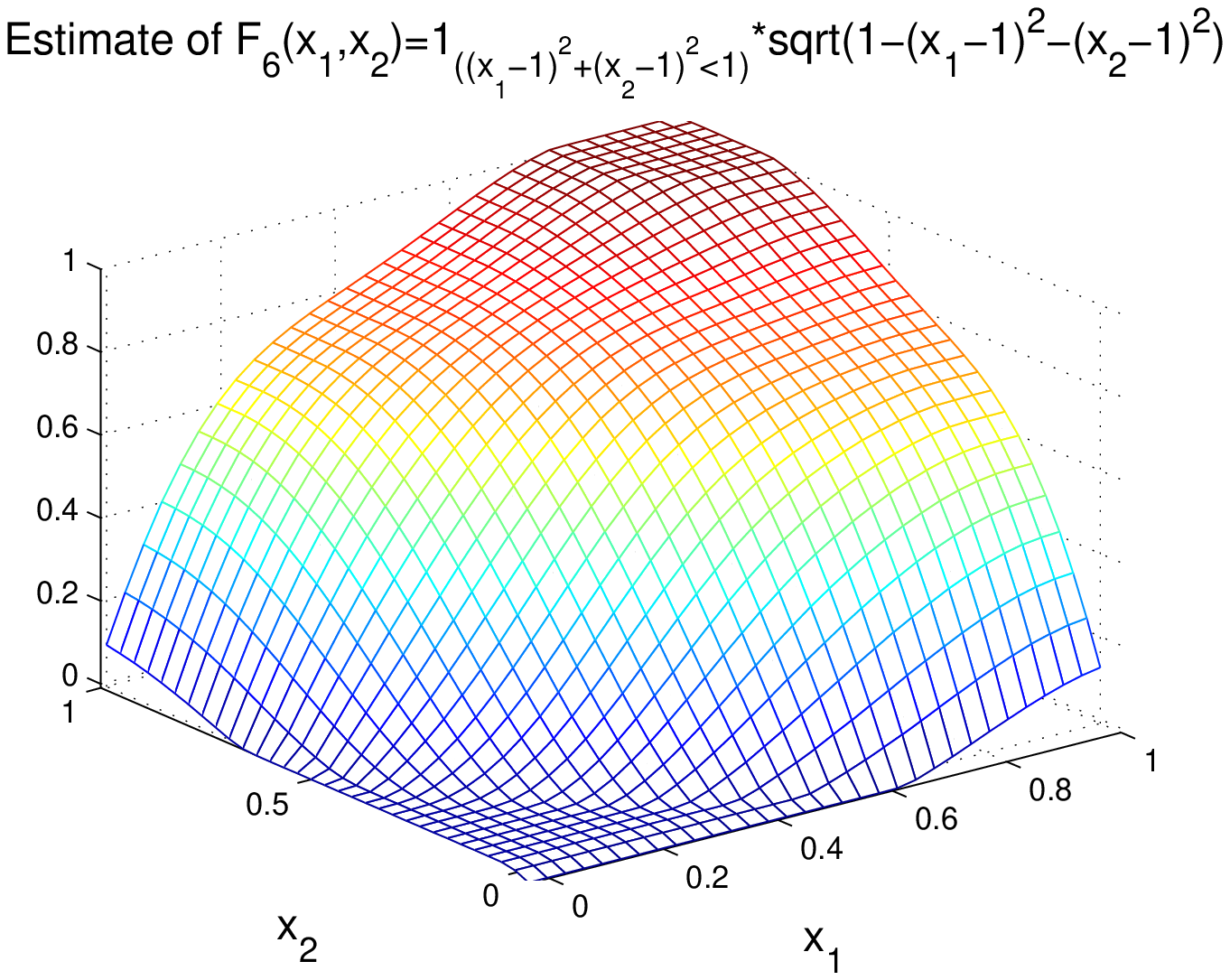}
\label{figs2}
\end{figure}

%

%
%
%
%
%
%

We illustrate the application to non-Gaussian data through analyzing pneumoconiosis risk in mine workers (\cite{data}). In epidemiology and toxicology studies, it is often of interest to assess joint risk as a function of multiple exposures, with risk increasing as dose of each exposure increases. Under this assumption, the probability of pneumoconiosis is a monotone function.

The data were collected for coal miners who had been employed only as coal getters on the coal face and haulage workers in the underground roadways. The exposures are defined as the length of time spent at these two types of work, $t=(t_1,t_2)^T$, with records obtained on whether each miner developed pneumoconiosis. We let $\mbox{pr}(y=1| t) =
\Phi\{ F(t) \}$ which is the probability for a worker to develop pneumoconiosis under level $t$, with $F$ a real-valued bivariate monotone function and $\Phi(\cdot)$ the standard normal cumulative distribution function. We give $F$ a Gaussian process prior as described in the simulation examples, with the draws from the posterior projected to the constrained space.

  We apply our method in estimating the monotone surface of response probability. The dose-response surface is estimated by projecting  $\Phi(w)$ where $w$ is the posterior sample path of of a Gaussian process and  $\Phi(\cdot)$ is the Probit link function.  The likelihood is given by the binomial model instead of the normal model.  The estimated dose-response surface and its corresponding 95$\%$ pointwise credible intervals are plotted in Figure \ref{figmine}.

%

%

\begin{figure}[ht]
\caption{ Gaussian process projection estimate of (binary) monotone response surface and its 95$\%$ credible intervals.}
\includegraphics[width=8cm]{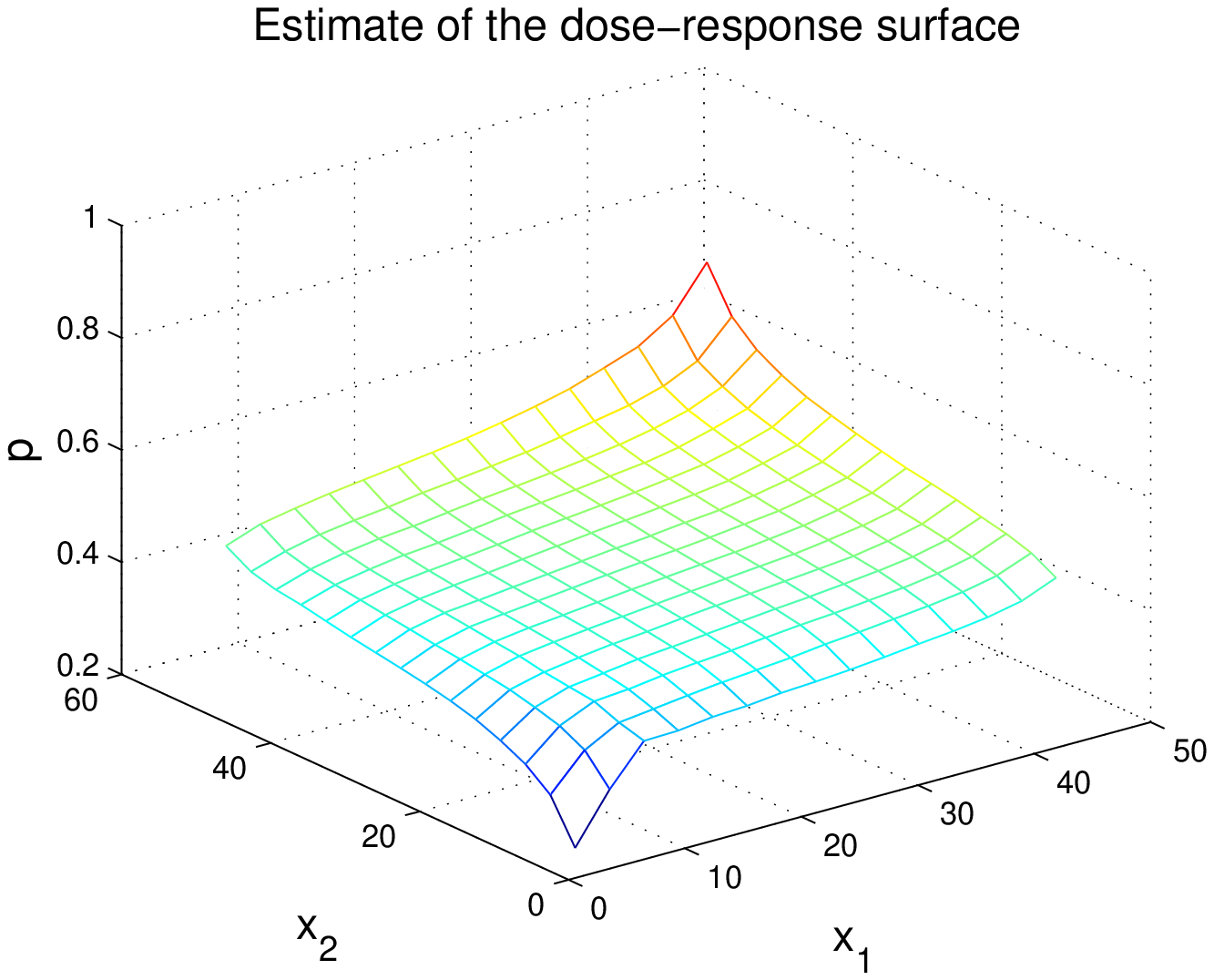}
\includegraphics[width=8cm]{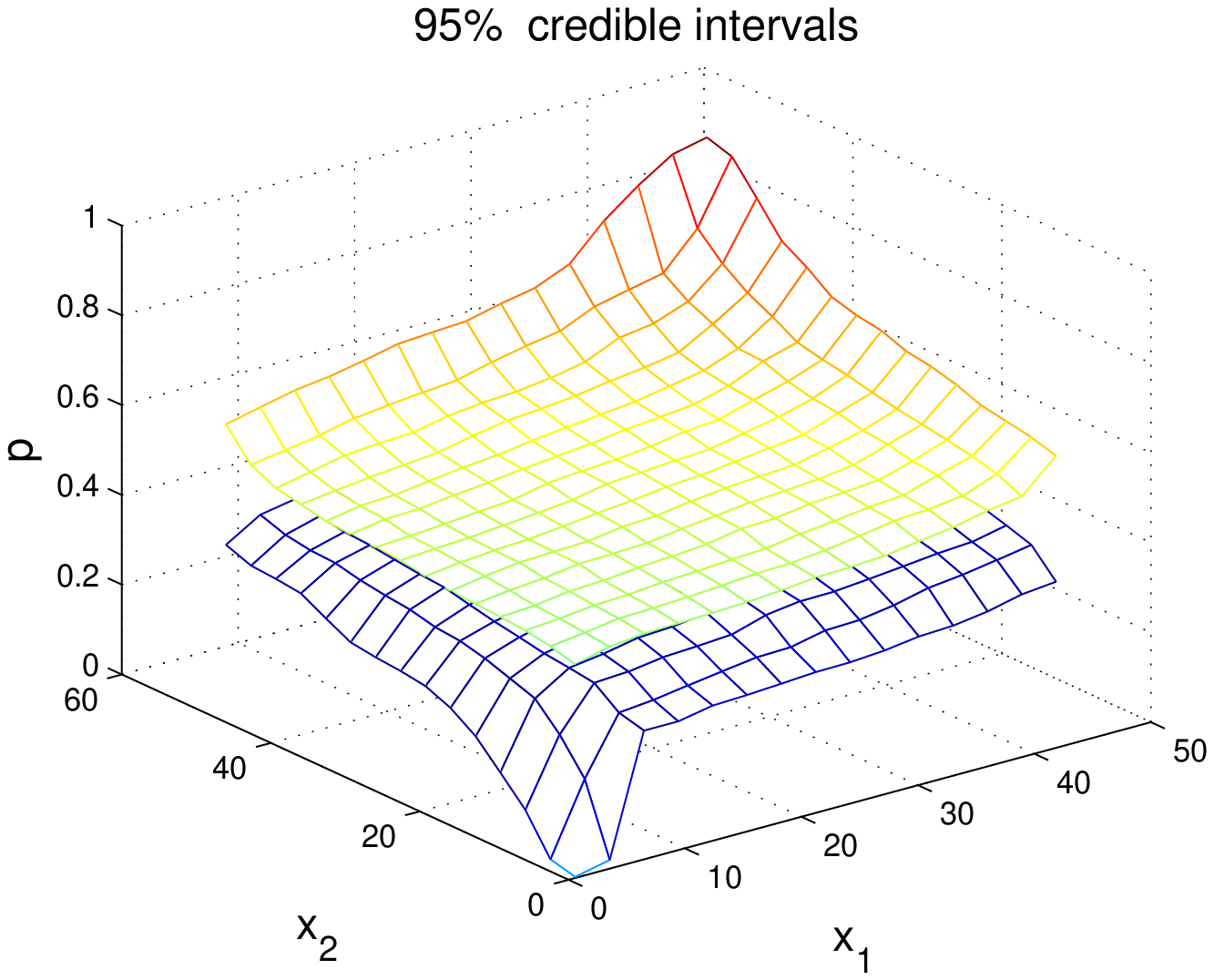}
\label{figmine}
\end{figure}

\begin{appendix}

\section*{Appendix 1}
\subsection*{Lemma \ref{lemma-projection}}
\begin{proof}
Let $x$ be any real number in $[0, 1]$. We see that
\begin{align}
\label{the-difference-between-P-w1-and-P-w2-equation}
P_{w_1}(x)-P_{w_2}(x)=\inf_{v \ge x}\sup_{u \le x}\dfrac{1}{v-u}\int_u^v w_1(t)dt - \inf_{v \ge x}\sup_{u \le x}\dfrac{1}{v-u}\int_u^v w_2(t)dt.
\end{align}
For each $\epsilon > 0$, there exists an element $v_0 \ge x$ such that
\begin{align*}
\sup_{u \le x}\dfrac{1}{v_0 - u}\int_u^{v_0}w_2(t)dt < \inf_{v \ge x}\sup_{u\le x}\dfrac{1}{v - u}\int_u^v w_2(t)dt + \epsilon,
\end{align*}
and hence
\begin{align*}
-\inf_{v \ge x}\sup_{u \le x}\dfrac{1}{v-u}\int_u^v w_2(t)dt < -\sup_{u \le x}\dfrac{1}{v_0 - u}\int_u^{v_0}w_2(t)dt + \epsilon.
\end{align*}
It follows from $(\ref{the-difference-between-P-w1-and-P-w2-equation})$ that
\begin{align}
\label{the-first-inequality-for-P-w1-and-P-w2-inequality}
P_{w_1}(x)-P_{w_2}(x) < \inf_{v \ge x}\sup_{u \le x}\dfrac{1}{v-u}\int_u^v w_1(t)dt -\sup_{u \le x}\dfrac{1}{v_0 - u}\int_u^{v_0}w_2(t)dt + \epsilon.
\end{align}
Note that
\begin{align*}
\inf_{v \ge x}\sup_{u \le x}\dfrac{1}{v-u}\int_u^v w_1(t)dt \le \sup_{u \le x}\dfrac{1}{v-u}\int_u^v w_1(t)dt
\end{align*}
for all $v \ge x$. In particular, this implies that the inequality above holds for $v = v_0$. Hence we see from $(\ref{the-first-inequality-for-P-w1-and-P-w2-inequality})$ that
\begin{align}
\label{the-second-inequality-for-P-w1-and-P-w2-inequality}
P_{w_1}(x)-P_{w_2}(x) < \sup_{u \le x}\dfrac{1}{v_0 - u}\int_{u}^{v_0} w_1(t)dt -\sup_{u \le x}\dfrac{1}{v_0 - u}\int_u^{v_0}w_2(t)dt + \epsilon.
\end{align}
For each $\delta > 0$, there exists an element $u_0 \le x$ such that
\begin{align*}
\sup_{u \le x}\dfrac{1}{v_0 - u}\int_{u}^{v_0} w_1(t)dt < \dfrac{1}{v_0 - u_0}\int_{u_0}^{v_0} w_1(t)dt + \delta.
\end{align*}
Since $u_0 \le x$, we have
\begin{align*}
\sup_{u \le x}\dfrac{1}{v_0 - u}\int_u^{v_0}w_2(t)dt \ge \dfrac{1}{v_0 - u_0}\int_{u_0}^{v_0}w_2(t)dt.
\end{align*}
Thus it follows from $(\ref{the-second-inequality-for-P-w1-and-P-w2-inequality})$ that
\begin{align*}
P_{w_1}(x)-P_{w_2}(x) &< \dfrac{1}{v_0 - u_0}\int_{u_0}^{v_0} w_1(t)dt -\dfrac{1}{v_0 - u_0}\int_{u_0}^{v_0}w_2(t)dt + \epsilon + \delta \\
&< \dfrac{1}{v_0 - u_0}\int_{u_0}^{v_0}|w_1(t) - w_2(t)| dt + \epsilon + \delta \\
&< \dfrac{1}{v_0 - u_0}\int_{u_0}^{v_0}\sup_{t\in [0,1]}|w_1(t) - w_2(t)| dt + \epsilon + \delta \\
&< \sup_{t \in [0,1]}|w_1(t) - w_2(t)| + \tau,
\end{align*}
where $\tau = \epsilon + \delta$, and  $\epsilon$ and $\delta$ are arbitrarily positive numbers. Therefore,  for every $\tau > 0$ and $x \in [0,1]$, we have
\begin{align*}
P_{w_1}(x)-P_{w_2}(x) < \sup_{t \in [0,1]}|w_1(t) - w_2(t)| + \tau.
\end{align*}
Thus we see that
\begin{align*}
|P_{w_1}(x)-P_{w_2}(x)| < \sup_{t \in [0,1]}|w_1(t) - w_2(t)| + \tau
\end{align*}
for every $\tau > 0$ and $x \in T$, and hence, for every $\tau > 0$, we have that
\begin{align*}
\sup_{x \in [0,1]}|P_{w_1}(x)-P_{w_2}(x)| < \sup_{t \in [0,1]}|w_1(t) - w_2(t)| + \tau.
\end{align*}
Upon letting $\tau \rightarrow 0$ in the above inequality, the lemma follows.
\end{proof}

 We first prove a lemma which is used in proving Theorem \ref{th-surfsol}.
\begin{lemma}
\label{lem-apen}
Let $C_s$ be the cone of continuous functions $f(s,t)$ which are monotone with respect to $s$ for any $t$  and $C_t$ be the cone of continuous functions  which are monotone with respect to $t$ for any $s$.
Define their dual cones $C_s^*$ and $C_t^*$ as
\begin{equation*}
C_s^*=\left\{g(s,t)\in C[0,1]^2: \int f(s,t)g(s,t)ds\leq 0,\;\text{for all}\; t \;\text{and}\; f\in C_s  \right\},
\end{equation*}
and
\begin{equation*}
C_t^*=\left\{g(s,t)\in C[0,1]^2: \int f(s,t)g(s,t)dt\leq 0,\;\text{for all}\; s \;\text{and}\; f\in C_t  \right\}.
\end{equation*}
 Denote $P(w|C_s)$ as the projection of $w$ over $C_s$ by minimizing $\int (w-f)^2ds$ for all $f\in C_s$ and any fixed $t$. Denote  $P(w|C_t)$ as the projection of $w$ over $C_t$ by minimizing $\int (w-f)^2dt$ for all $f\in C_t$ and any fixed $s$. Then
\begin{equation}
P(w|C_s^*)=w-P(w|C_s)\;\text{and}\; P(w|C_t^*)=w-P(w|C_t).
\end{equation}
Furthermore, $P(w|C_s^*)$ turns out to be the solution to the projection by minimizing $\int (w-f)^2dsdt$ over all $f\in C_s$ and $P(w|C_t^*)$ is the solution to the projection by minimizing
$\int (w-f)^2dsdt$ over all $f\in C_t.$
\end{lemma}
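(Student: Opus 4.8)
The plan is to prove both assertions by reducing to the one-dimensional theory one slice at a time and then invoking Moreau's decomposition theorem, which states that in a Hilbert space every element splits orthogonally through a closed convex cone and its polar. First I would fix $t$ and regard $s\mapsto w(s,t)$ as an element of $L^2[0,1]$, and let $K\subset L^2[0,1]$ be the cone of nondecreasing functions. Convexity and invariance under positive scaling of $K$ are immediate, and closedness holds because an $L^2$-limit of nondecreasing functions is nondecreasing almost everywhere (pass to an a.e.\ convergent subsequence). By property (3), citing Rychlik (2001), the projection of a continuous function onto the monotone cone is continuous, so the $L^2$ projection $P_K$ lands in $C[0,1]$ and coincides with the constrained minimizer over continuous monotone functions; thus $P(w\mid C_s)(\cdot,t)=P_K(w(\cdot,t))$ for each $t$, and symmetrically for $C_t$.

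Next I would identify the dual cone slice by slice. For fixed $t_0$ the set $\{f(\cdot,t_0):f\in C_s\}$ is exactly the continuous nondecreasing functions, since any such function extends to $C_s$ by being held constant in $t$. Hence the defining condition $\int f(s,t)g(s,t)\,ds\le 0$ for all $t$ and all $f\in C_s$ is equivalent to $g(\cdot,t)\in K^{\circ}$ for every $t$, where $K^{\circ}=\{h\in L^2[0,1]:\int h\varphi\le 0\ \text{for all}\ \varphi\in K\}$ is the polar cone. This yields the slice-wise description $C_s^*=\{g\in C[0,1]^2:g(\cdot,t)\in K^{\circ}\ \text{for all}\ t\}$, and likewise for $C_t^*$.

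With these two ingredients the first assertion is immediate: applying Moreau's decomposition in $L^2[0,1]$ to $v=w(\cdot,t)$ gives $w(\cdot,t)=P_K(w(\cdot,t))+P_{K^{\circ}}(w(\cdot,t))$ orthogonally, and the slice-wise identifications turn this into $P(w\mid C_s^*)=w-P(w\mid C_s)$, with the argument for $C_t^*$ identical. To be sure these objects actually lie in the stated cones of continuous functions I would appeal to Lemma \ref{lemma-projection}: the one-dimensional projection is $1$-Lipschitz in the supremum norm, so $t\mapsto P_K(w(\cdot,t))$ inherits the continuity of $t\mapsto w(\cdot,t)$ and $P(w\mid C_s)$ is jointly continuous, whence $P(w\mid C_s^*)=w-P(w\mid C_s)$ is continuous as well. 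For the final statement I would use that the two-dimensional objective separates, $\iint\{w-f\}^2\,ds\,dt=\int\big[\int\{w-f\}^2\,ds\big]\,dt$, and that $f\in C_s$ couples distinct slices only through continuity; hence the pointwise minimizer $t\mapsto P_K(w(\cdot,t))$ minimizes each inner integral, is itself continuous by the Lipschitz bound, and therefore solves the full double-integral projection onto $C_s$, showing that the slice-wise and genuine two-dimensional projections agree (and symmetrically for $C_t$).

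The step I expect to be the main obstacle is the continuity bookkeeping rather than the cone algebra: one must verify that enlarging to the $L^2$-closed cone $K$, which contains discontinuous monotone functions, does not push the minimizer outside $C[0,1]^2$, and that the assembled slice-wise projection is jointly continuous so that it genuinely belongs to $C_s$ (resp.\ $C_s^*$). Both are controlled by the regularity of the one-dimensional projection from property (3) together with the $1$-Lipschitz bound of Lemma \ref{lemma-projection}, but checking them carefully is where the real work lies.
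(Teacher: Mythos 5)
Your proposal is correct and follows essentially the same route as the paper: a slice-wise reduction to the one-dimensional monotone cone, the polar-cone decomposition $w = P(w|C_s) + P(w|C_s^*)$ (your appeal to Moreau's theorem is exactly the content the paper extracts from Rychlik's Theorem~1 via the conditions $\langle w - P(w|C_s), f\rangle \le 0$ for $f\in C_s$ and $\langle w - P(w|C_s), P(w|C_s)\rangle = 0$), and separation of the double integral into inner slice integrals for the final claim. Your additional bookkeeping on the joint continuity of the slice-wise projection via Lemma~\ref{lemma-projection} is a point the paper passes over silently, but it does not change the structure of the argument.
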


\begin{proof}[Proof of Lemma \ref{lem-apen}]
First note that $P(w|C_s)$ is obtained by minimizing $\int (w-f)^2ds$ for all $f\in C_s$  and any fixed $t$.
Then according to Theorem 1 of \cite{rychlik}, one has  $\int \{w-P(w|C_s)\}fds\leq 0$ and $\int P(w|C_s)\{w-P(w|C_s)\}ds=0$ by the properties of $P(w|C_s)$. The first property implies  $w-P(w|C_s)\in C_s^*$.
 For any $h\in C_s^*$, one has $\int \{w-(w-P(w|C_s))\}hds=\int P(w|C_s)hds\leq 0$.
 One can then deduce that $P(w|C_s^*)=w-P(w|C_s).$
With a similar argument, one can show that $P(w|C_t^*)=w-P(w|C_t)$.

Since $\int \{w-P(w|C_s)\}^2ds\leq \int (w-f)^2ds$ for any fixed $t$, then one has $\int \{w-P(w|C_s)\}^2dsdt\leq \int (w-f)^2dsdt$. Therefore,  $P(w|C_s^*)$ minimizes $\int (w-f)^2dsdt$ for all $f\in C_s$ and  $P(w|C_t^*)$ minimizes $\int (w-f)^2dsdt$ for all $f\in C_t$ by the same argument.

\end{proof}

\begin{proof}[Proof of Theorem \ref{th-surfsol}]


Define the norm
$||f||=\langle f,f\rangle^{1/2}=\left[\int \left\{f^2(s,t)\right\}dsdt\right]^{1/2}$
with $\langle\cdot,\cdot\rangle$ denoting the inner product.

One has $-S^{(k+1)}=(w+T^{(k)})-\widehat{w}^{(k)}=(w+T^{(k)})-P(w+T^{(k)}|C_s)=P(w+T^{(k)}|C_s^*)$ where the last equality follows from  Lemma A1. Here $P(w+T^{(k)}|C_s)$ denotes the projection of $w+T^{(k)}$ onto $C_s$ and $P(w+T^{(k)}|C_s^*)$ is the projection onto $C_s^*$.
Therefore  $-S^{(k+1)}$ minimizes $||(w+T^{(k)})-f||$ for all $f\in C_s^*$ and $-T^{(k)}$ minimizes $||(w+S^{(k)})-f||$ for all $f\in C_t^*$. Then one  concludes that
$$||\widehat{w}^{(k)}||=||w+S^{(k)}-(-T^{(k-1)})||\geq ||w+S^{(k)}-(-T^{(k)})||\geq ||w+T^{(k)}-(-S^{(k+1)})||$$
for all $k$. Therefore, one has $||\widehat{w}^{(k)}||\geq||\widetilde{w}^{(k)}||\geq ||\widehat{w}^{(k+1)}||$.
Now we wish to show that $\{S^{(k)}\}$ and $\{T^{(k)}\}$ are bounded and that $||S^{(k+1)}-S^{(k)}||\rightarrow 0$ and $||T^{(k+1)}-T^{(k)}||\rightarrow 0$ as $k\rightarrow \infty$.

Assume that $\{S^{(k)}\}$ or $\{T^{(k)}\}$ is not bounded. Take an arbitrary large number $M > 0$. Then there exists an integer $N$ such that $|S^{(N)}|\geq M$ or $|T^{(N)}|\geq M$. One then partitions $[0,1]\times [0,1]$ into $m_1m_2$ squares of equal areas. The vertices of the squares are of the form $(s_i, t_j)$, where $s_i = \frac{i-1}{m_1}$ and $t_j = \frac{j-1}{m_2}$ with $i=1,\ldots,m_1+1$ and $j=1,\ldots, m_2+1$. Let $(s_{i_0},t_{j_0})$ be the point such that $|S^{(N)}|\geq M$ or $|T^{(N)}|\geq M$ over the square between $(s_{i_0},t_{j_0})$ and $(s_{i_0+1},t_{j_0+1})$ for the first time with respect to the partial ordering. Without loss of generality, assume $|S^{(N)}|\geq M$. For $t=t_{j_0}$, let $f$ be the monotone function such that $f=-1$ for $s\leq s_{i_0}$, $f=0$ for $s\geq s_{i_0+1}$ and $f$ is linearly interpolated between $(s_{i_0},t_{j_0})$ and $(s_{i_0+1},t_{j_0+1})$ . Note that for any $\epsilon>0$, one can also partition $[0,1]\times [0,1]$ finely enough such that $|\sum_{i=1}^{m_1}S(s_i, t_{j_0})f(s_{i},t_{j_0})-\int S(s,t_{j_0})f(s,t_{j_0})ds|<\epsilon$ and $|\sum_{j=1}^{m_2}S(s_{i_0}, t_j)(s_{i_0},t_j)-\int S(s_{i_0},t)f(s_{i_0},t)dt|<\epsilon$. By the properties of the dual cones and our construction of the function $f$, one can partition $[0,1]\times [0,1]$ finely enough such that $\sum_{i=1}^{i_0}S(s_i, t_{j_0})\leq 0$, which implies that $S^{(N)}\leq -M$ in the square between $(s_{i_0},t_{j_0})$ and $(s_{i_0+1},t_{j_0+1})$ up to an arbitrary small number $\epsilon$. Since the norm of $\widetilde{w}^{(N)}=w+S^{(N)}+T^{(N)}$ is bounded, it follows from the continuity of the estimates that $T^{(N)}\geq M$ up to an arbitrary small number $\epsilon$ over the square of $(s_{i_0},t_{j_0})$. On the other hand, we know that $\sum_{j=1}^{j_0}T(s_{i_0}, t_{j})\leq 0$, which contradicts the fact that $T^{(N)}\geq M$ and $(s_{i_0}, t_{j_0})$ is the point such that $|S^{(N)}|\geq M$ or $|T^{(N)}|\geq M$ over the square between $(s_{i_0},t_{j_0})$ and $(s_{i_0+1},t_{j_0+1})$ for the first time.

Since $\{S^{(k)}\}$ and $\{T^{(k)}\}$ are  bounded, then there exists convergent subsequences indexed by $n_i$ such that
$\{S^{(n_i)}\}\rightarrow S$ and $\{T^{(n_i)}\}\rightarrow T$. Then one has for $n_i\rightarrow\infty$,
\begin{equation}
\widetilde{w}^{(n_i)}=w+S^{(n_i)}+T^{(n_i)}\rightarrow w+S+T;\widehat{w}^{(n_i+1)}=w+S^{(n_i+1)}+T^{(n_i)}\rightarrow w+S+T.
\end{equation}
Denote the limit as $w_L=w+S+T.$ One claims that $w_L$ is the  projection of $w$ which is the solution to \eqref{eq-mother} under the partial ordering constraint.
First note that $w_L\in C_s$ since $\widehat{w}^{(k)}\in C_s$ and $w_L\in C_t$ since $\widetilde{w}^{(k)}\in C_t$. This implies that $w_L\in C_s\cap C_t$ which says that $w_L$ is monotone with respect to the partial ordering on $(s,t)$. Now,
\begin{align*}
\langle w-w_L,w_L\rangle&=\langle w-w_L+S,w_L\rangle-\langle S,w_L\rangle\\
&=\lim_{n_i\rightarrow\infty}\langle w+S^{(n_i)}-\widetilde{w}^{(n_i)},\widetilde{w}^{(n_i)}\rangle+
\lim_{n_i\rightarrow\infty}\langle w+T^{(n_i)}-\widehat{w}^{(n_i+1)},  \widehat{w}^{(n_i+1)}\rangle\\
&=0+0=0.
\end{align*}
Let $h$ be any element in $C_s\cap C_t$, one looks at
\begin{align*}
\langle w-w_L,h\rangle&=\langle w-w_L+S,h\rangle-\langle S,h\rangle\\
&=\lim_{n_i\rightarrow\infty}\langle w+S^{(n_i)}-\widetilde{w}^{(n_i)},h\rangle+
\lim_{n_i\rightarrow\infty}\langle w+T^{(n_i)}-\widehat{w}^{(n_i+1)},  h\rangle\\
&\leq 0+0=0.
\end{align*}
Then by Theorem 1 in \cite{rychlik}, $w_L$ is indeed the projection of $w$.
Now we will show that $||S^{(i+1)}-S^{(i)}||^2\rightarrow 0$ and $||T^{(i+1)}-T^{(i)}||^2\rightarrow 0$ with which we can conclude that $S^{(k)}\rightarrow S$, $T^{(k)}\rightarrow T$ and both  $\widetilde{w}^{(k)}$ and $\widehat w^{(k)}$ converge to $w_L$.
First by the projection property, one can show that
\begin{align*}
||T^{(i)}-T^{(i-1)}||^2&=||w+T^{(i)}-(w+T^{(i-1)}||^2\geq||S^{(i+1)}-S^{(i)}||^2\\
&=||w+S^{(i+1)}-(w+S^{(i)})||^2\geq ||T^{(i+1)}-T^{(i)}||^2.
\end{align*}
Therefore, $S^{(i+1)}-S^{(i)}$ and $T^{(i+1)}-T^{(i)}$ converge to the same limit.
Assume on the contrary that $||S^{(i+1)}-S^{(i)}||^2$ does not converge to zero. Then over some Lebesgue measure non-zero set $U$, there exists $\epsilon>0$ such that for $(s,t)\in U$
\begin{equation}
\label{eq-con1}
|S^{(i+1)}-S^{(i)}|>\epsilon\; \text{for infinitely many}\; i.
\end{equation}
Now look at
\begin{align*}
||S^{(i+1)}-S^{(i)}||^2-||T^{(i+1)}-T^{(i)}||^2&=||w+T^{(i)}+S^{(i)}-(w+T^{(i+1)}+S^{(i+1)})||^2\\
&+2\langle w+T^{(i)}+S^{(i)}-(w+T^{(i+1)}+S^{(i+1)}  ), S^{(i+1)}-S^{(i)}\rangle.
\end{align*}
Note that $||S^{(i+1)}-S^{(i)}||^2-||T^{(i+1)}-T^{(i)}||^2\rightarrow 0$ and the last term on the right hand side of the above equation is non-negative. Therefore, one can conclude that
\begin{equation}
\label{eq-ss1}
(T^{(i+1)}-T^{(i)})-(S^{(i+1)}-S^{(i)})\rightarrow 0.
\end{equation}
By a similar argument, one has
\begin{equation}
\label{eq-ss2}
(T^{(i+2)}-T^{(i+1)})-(S^{(i+1)}-S^{(i)})\rightarrow 0.
\end{equation}
Subtracting \eqref{eq-ss1} from \eqref{eq-ss2}, one has
$$(T^{(i+2)}-T^{(i+1)})-(T^{(i+1)}-T^{(i)})\rightarrow 0.$$
This implies that there exist $i,j$ large enough with $|i-j|$ finite such that
$T^{(i+1)}-T^{(i)}$ can be made arbitrarily close to $T^{(j+1)}-T^{(j)}$. However this contradicts  \eqref{eq-con1}
and the fact that $\{T^{(k)}\}$ is bounded such that there exists constant $C$ such that $|T^{(i)}-T^{(j)}|<C$ for all $i$, $j$.
By the same argument, one can show that $||T^{(i+1)}-T^{(i)}||^2\rightarrow 0$. Therefore,   $S^{(k)}\rightarrow S$, $T^{(k)}\rightarrow T$ which implies $\widetilde{w}^{(k)}\rightarrow w_L$ and $\widehat w^{(k)}\rightarrow w_L$.

The inequality in the Theorem can be  shown combining Lemma \ref{lemma-projection} and the properties of the projection.
\end{proof}

\begin{proof}[Proof of Theorem 5]

We wish to find a probability $\widetilde{\Pi}$, say, on the space $\Gamma$  (thought of as a `prior' for $(F,\sigma)$, but which may depend on the data $y_i$) such that the projection of the posterior  $\Pi(\cdot|y_1,\ldots,y_n)$  of a Gaussian process is the posterior on $\Gamma$  with prior $\widetilde{\Pi}$.  Let $\widetilde{\Pi}(dF,d\sigma|y_1,\ldots,y_n)$ denote this probability.  Since the (conditional) density of the observations $y_1,\ldots,y_n$, given $(F,\sigma)$, is the joint Normal density as before, say $f(y_1,\ldots,y_n|F,\sigma)$,  one needs to have $\widetilde{\Pi}(dF,d\sigma|y_1,\ldots,y_n)$ satisfy
  \begin{align}
  \label{eq-empi_prior}
\widetilde{\Pi}(dF, d\sigma|y_1,\ldots,y_n) \int_{  \Gamma}  f(y_1,\ldots,y_n|F,\sigma) \widetilde{\Pi}(dF, d\sigma)=  f(y_1,\ldots,y_n |F,\sigma) \widetilde{\Pi}(dF, d\sigma).
\end{align}

Let \begin{align*}
g(F,\sigma| y_1,\ldots,y_n )=\dfrac{1}{f(y_1,\ldots,y_n|F,\sigma)}\left\{\int_{  \Gamma} f\left(y_1,\ldots,y_n|F,\sigma\right)^{-1}  \widetilde{\Pi}(dF,d\sigma|y_1,\ldots,y_n)\right\}^{-1},
\end{align*}
which is well-defined since $\int_{  \Gamma} f\left(y_1,\ldots,y_n|F,\sigma\right)^{-1}  \widetilde{\Pi}(dF,d\sigma|y_1,\ldots,y_n)<\infty$.
First note that $g$ is a density on $\Gamma$ with respect to the measure  $\widetilde{\Pi}(dF,d\sigma|y_1,\ldots, y_n)$ since one can easily check that
$\int_{\Gamma} g(F,\sigma| y_1,\ldots,y_n )\widetilde{\Pi}(dF,d\sigma|y_1,\ldots, y_n)=1$.
Define $$\widetilde{\Pi}(dF, d\sigma)=g(F,\sigma| y_1,\ldots,y_n )\widetilde{\Pi}(dF,d\sigma|y_1,\ldots, y_n).$$  We will show that $\widetilde{\Pi}(dF, d\sigma)$ satisfies equation \eqref{eq-empi_prior} above which is equivalent to showing
$$ \int_{ \Gamma}  f(y_1,\ldots,y_n|F,\sigma) \widetilde{\Pi}(dF, d\sigma)=f(y_1,\ldots,y_n |F,\sigma)g(F,\sigma| y_1,\ldots,y_n ).$$
One has
\begin{align*}
 &\int_{  \Gamma}  f(y_1,\ldots,y_n|F,\sigma) \widetilde{\Pi}(dF, d\sigma)\\
 &= \int_{  \Gamma}  f(y_1,\ldots,y_n|F,\sigma)g(F,\sigma| y_1,\ldots,y_n )\widetilde{\Pi}(dF,d\sigma|y_1,\ldots, y_n)\\
 &=\int_{ \Gamma}  \left\{\int_{  \Gamma} f\left(y_1,\ldots,y_n|F,\sigma\right)^{-1}  \widetilde{\Pi}(dF,d\sigma|y_1,\ldots,y_n)\right\}^{-1}\widetilde{\Pi}(dF,d\sigma|y_1,\ldots, y_n)\\
 &= \left\{\int_{ \Gamma} f\left(y_1,\ldots,y_n|F,\sigma\right)^{-1}  \widetilde{\Pi}(dF,d\sigma|y_1,\ldots,y_n)\right\}^{-1}=f(y_1,\ldots,y_n |F,\sigma)g(F,\sigma| y_1,\ldots,y_n ).
\end{align*}
Then our contention follows.
\end{proof}

\end{appendix}

\section*{Acknowledgement}
Lizhen Lin  thanks Professor Rabi Bhattacharya for useful discussions.
This work was supported by grant R01ES017240 from the National Institute of Environmental Health Sciences (NIEHS) of the National Institute of Health.

\end{document}